\newtheorem{Lem}{Lemma}
\newtheorem{Thm}{Theorem}
\def\be{ \begin{eqnarray} }
\def\ee{ \end{eqnarray} }
\begin{document}

\title{Model-Assisted Learning for Adaptive Cooperative Perception of Connected Autonomous Vehicles}

\author{Kaige Qu, \IEEEmembership{Member, IEEE}, Weihua Zhuang, \IEEEmembership{Fellow, IEEE}, Qiang Ye, \IEEEmembership{Senior Member, IEEE}, \\Wen Wu, \IEEEmembership{Senior Member, IEEE}, and Xuemin Shen, \IEEEmembership{Fellow, IEEE}%
\thanks{This work was supported by the Natural Sciences and Engineering Research Council (NSERC) of Canada.}
\thanks{Kaige Qu, Weihua Zhuang, and Xuemin (Sherman) Shen are with the Department of Electrical and Computer Engineering, University of Waterloo, Waterloo, ON N2L 3G1, Canada (emails: \{k2qu, wzhuang, sshen\}@uwaterloo.ca).}
\thanks{Qiang Ye is with University of Calgary, Calgary, AB T2N 1N4, Canada (email: qiang.ye@ucalgary.ca).}
\thanks{Wen Wu is with Peng Cheng Laboratory, Shenzhen, Guangdong, China, 518055 (email: wuw02@pcl.ac.cn). He contributed to this study while working as a postdoctoral fellow at the University of Waterloo, Canada.}
}

\maketitle

\begin{abstract}

Cooperative perception (CP) is a key technology to facilitate consistent and accurate situational awareness for connected and autonomous vehicles (CAVs). To tackle the network resource inefficiency issue in traditional broadcast-based CP, unicast-based CP has been proposed to associate CAV pairs for cooperative perception via vehicle-to-vehicle transmission. In this paper, we investigate unicast-based CP among CAV pairs. With the consideration of dynamic perception workloads and channel conditions due to vehicle mobility and dynamic radio resource availability, we propose an adaptive cooperative perception scheme for CAV pairs in a mixed-traffic autonomous driving scenario with both CAVs and human-driven vehicles. We aim to determine when to switch between cooperative perception and stand-alone perception for each CAV pair, and allocate communication and computing resources to cooperative CAV pairs for maximizing the computing efficiency gain under perception task delay requirements. A model-assisted multi-agent reinforcement learning (MARL) solution is developed, which integrates MARL for an adaptive CAV cooperation decision and an optimization model for communication and computing resource allocation. Simulation results demonstrate the effectiveness of the proposed scheme in achieving high computing efficiency gain, as compared with benchmark schemes.

\end{abstract}

\begin{IEEEkeywords}

Connected and autonomous vehicles (CAVs), cooperative perception, data fusion, autonomous driving, multi-agent reinforcement learning, model-assisted learning.

\end{IEEEkeywords}

\IEEEpeerreviewmaketitle

\section{Introduction}
\label{sec:Introduction}

The advances in sensing, artificial intelligence (AI), and vehicles-to-everything (V2X) communication technologies have paved the way for autonomous driving, leading to a potential paradigm shift in future transportation systems towards improved road safety and traffic efficiency~\cite{zhuang2019sdn,shen2021holistic,hui2022collaboration}. 
Reliable and real-time environment perception is a key component in autonomous driving that facilitates the connected and autonomous vehicles (CAVs) to accurately and continuously perceive the surrounding objects, such as traffic participants, by using on-board cameras, light detection and ranging (LiDAR) sensors, and radar sensors~\cite{wang2018networking,zhang2019mobile}.  
To enhance the perception reliability in terms of both coverage and accuracy, \emph{cooperative perception} (CP) has been proposed to enable the sensory information sharing among CAVs by leveraging V2X communication, as a supplement to the \emph{stand-alone perception} (SP) by individual CAVs based on their own viewpoints~\cite{zheng2022confidence,jia2022online,jia2023mass,abdel2021vehicular,xiao2022perception,sun2022user,lin2022low}.  
In case of unreliable network connectivity or network congestion due to limited network resources, CAVs can switch back to the default SP mode~\cite{zhang2021emp}.

According to the type of shared sensory information, there are three CP levels, including raw level, feature level, and decision level.  
In the raw-level CP, complete~\cite{jia2022online,chen2019cooper} or partial raw data~\cite{qiu2021autocast,zhang2021emp} are shared among CAVs, which preserves the most fine-grained environmental information and leads to the highest perception performance gain at the cost of huge communication overhead due to the large data volume. 
The decision-level CP integrates lightweight perception results of individual CAVs, which is communication-efficient but with limited perception performance gain~\cite{xiao2022perception}.  
The feature-level CP, which has gained significant attention in the computer vision field, can balance between communication overhead and perception performance gain~\cite{chen2019f,wang2020v2vnet}.  
Research studies on feature-level CP have focused on the design of AI-based fusion schemes, but the underlying communication scheme is usually simple, e.g., via broadcast-based vehicle-to-vehicle (V2V) communication. Specifically, each CAV fuses all the received feature data broadcast from neighboring CAVs with its own data and processes the fused data for inference~\cite{chen2019f,wang2020v2vnet}.  
Although the feature data are compressed from the raw data, the data size is still large, e.g., in the scale of Mbits. 
Hence, the broadcast-based CP schemes are communication inefficient especially in dense-traffic scenarios and even not applicable when the available transmission resources are limited.   
Moreover, due to individual computation at each CAV, the overall computation demand is intensive, which is roughly proportional to the number of CAVs and the data volume processed at each CAV~\cite{zhang2021emp}. 
The communication and computation in such broadcast-based CP schemes lead to large network resource consumption for satisfying the stringent delay requirement of  real-time perception tasks.

Recently, there are some studies on the deployment of CP schemes in a practical network environment by considering the limited V2X communication bandwidth and on-board computing resources~\cite{jia2022online,jia2023mass,xiao2022perception,abdel2021vehicular}. 
As nearby CAVs collect sensing data of common objects in a shared environment from diverse viewpoints, adding sensing data from more CAVs for data fusion potentially improves the perception performance with a diminishing marginal gain, at the cost of almost linearly increasing network resources.  
For resource efficiency, unicast-based CP schemes have been studied in~\cite{jia2022online,jia2023mass,abdel2021vehicular}, 
which deal with the association of CAV pairs that perform the cooperative perception via unicast-based V2V communication. 
Two CAVs with complementary or enhancing sensory information usually provide higher perception performance gain through cooperation and tend to be associated, due to more even spatial distribution or higher intensity of fused sensing data.    
In comparison with the broadcast-based counterparts, unicast-based CP schemes significantly improve the network resource efficiency without a remarkable compromise on the perception performance through proper association of CAV pairs.

In this work, we investigate unicast-based feature-level CP among CAVs. 
Different from the existing works on CAV pair association to improve the perception performance gain, we investigate how to support CP for predetermined CAV pairs in a complex and dynamic network environment with high resource efficiency. 
Specifically, we consider a practical mixed-traffic autonomous driving scenario where a cluster of CAVs and human-driven vehicles (HDVs) traverse a road segment with intermittent road-side-unit (RSU) coverage due to the high RSU deployment cost. 
Each CAV pair works in either the SP mode by default or the unicast-based feature-level CP mode by selection.  
Considering the radio resource sharing among vehicles, the radio resource availability for CAVs dynamically changes over time. 
Due to vehicle mobility, the perception workloads and channel conditions for CAVs are dynamic in different perception task periods. 
In such a network scenario, 
it is challenging to constantly support all CAV pairs to work in the CP mode with delay satisfaction. 

To accommodate the network dynamics, we propose an adaptive cooperative perception scheme, which facilitates a dynamic selection of CAV pairs for cooperative perception. 
The selected CAV pairs are referred to as \emph{cooperative CAV pairs}, and the non-selected CAV pairs work in the SP mode by default.  
For each cooperative CAV pair, we dynamically allocate a fraction of available radio resources to support the feature data transmission, and adjust the CPU frequency at the CAVs on demand, to balance between the transmission and computation delays under the network dynamics, while satisfying a perception delay requirement. 
For the joint adaptive CAV cooperation and resource allocation, there is a trade-off between a total computing efficiency gain and a total cost for dynamically switching between the SP and CP modes for the CAV pairs. 
Specifically, for a CAV pair, the total \emph{computing demand} is significantly reduced in the CP mode, 
by performing the data fusion and inference at one CAV and allowing the computation result sharing within the CAV pair. 
However, due to the on-demand CPU frequency allocation, the \emph{CPU frequency} in the CP mode can be occasionally higher than that in the SP mode. 
As the CAVs work in the SP mode by default, we characterize the computing efficiency gain of a CAV pair as the reduced amount of computing energy consumption in comparison with that in the SP mode, which depends on both computing demand and CPU frequency. 
We focus on increasing the total computing efficiency gain while reducing the total switching cost, via proper selection of cooperative CAV pairs and optimal resource allocation.  
The main contributions of this paper are summarized as follows. 
\begin{itemize}

    \item We propose an adaptive cooperative perception scheme for CAV pairs in a moving mixed-traffic vehicle cluster, which allows each CAV pair to dynamically switch between the SP and CP modes over different perception task periods, to adapt to the network dynamics; 

    \item 
    We formulate a joint adaptive CAV cooperation and resource allocation problem, which can be decoupled to an adaptive CAV cooperation subproblem in the long run and a series of instantaneous resource allocation subproblems in each perception task period, to maximize the total computing efficiency gain with minimum switching cost, while satisfying the perception delay requirement;    

    \item We propose a model-assisted multi-agent reinforcement learning (MARL) solution, where MARL is used to learn the adaptive cooperation decisions among CAV pairs, and a model-based solution is used for resource allocation given each cooperation decision. 

\end{itemize}

The remainder of this paper is organized as follows. The system model is presented in Section~\ref{sec:System Model}, with a performance analysis included in Section~\ref{sec:2D Performance Region Analysis}. The joint adaptive CAV cooperation and resource allocation problem is formulated in Section~\ref{sec:problem}, with a model-assisted MARL solution presented in Section~\ref{sec:solution}. Simulation results are provided in Section~\ref{sec:Simulation Results}, and conclusions are drawn in Section~\ref{Conclusion}.

\section{System Model}
\label{sec:System Model}

\begin{figure}
\centering
{
\includegraphics[width=1\linewidth]{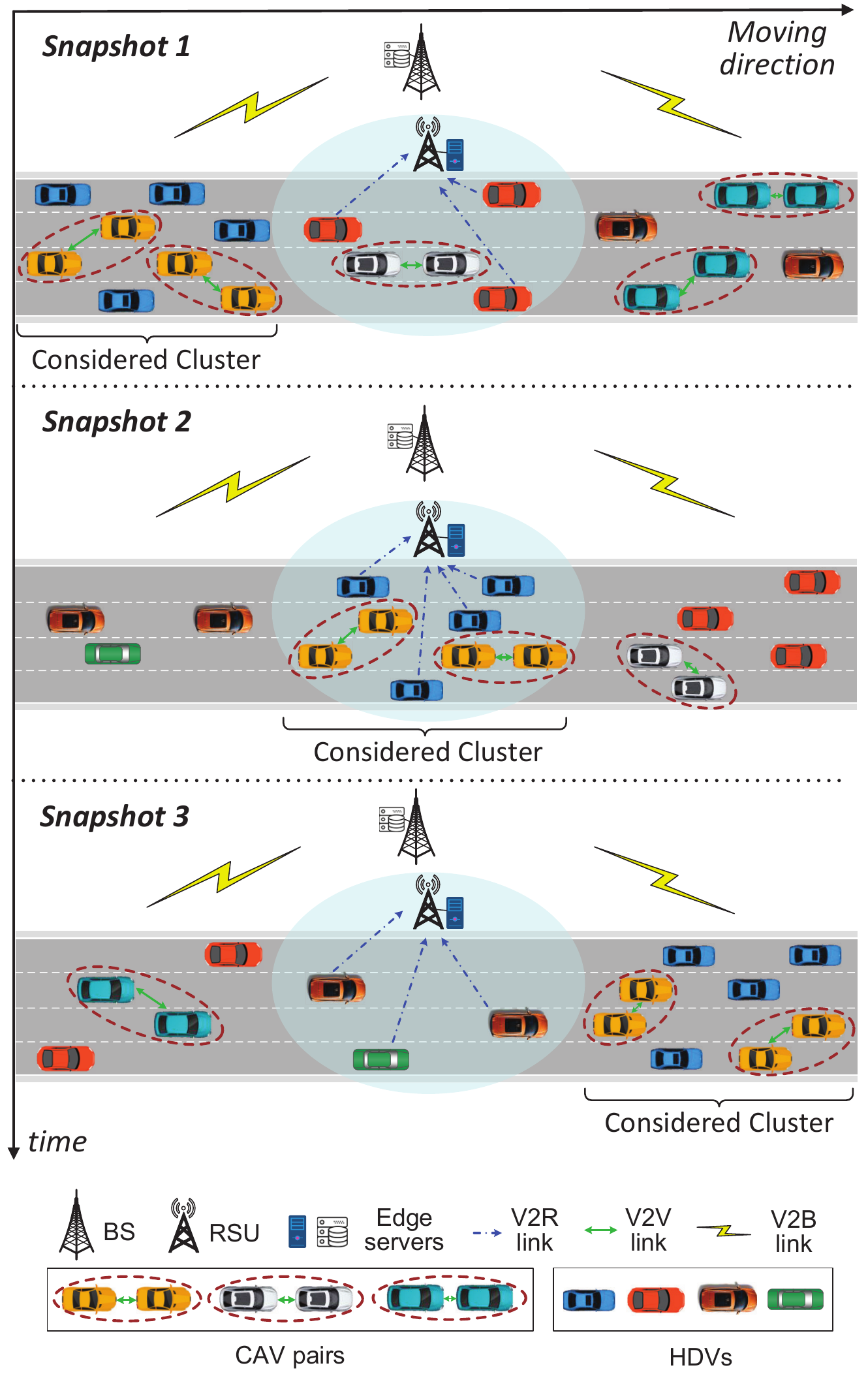}} 
\caption{A mixed-traffic autonomous driving scenario.}\label{fig:JP7-Scenario_with_BS}
\end{figure}

\subsection{Mixed-Traffic Autonomous Driving Scenario}

We consider a vehicle cluster including $M$ HDVs and $K$ CAV pairs in set $\mathcal{K}$, 
moving on a multi-lane unidirectional road under a consistent base station (BS) coverage and an intermittent RSU coverage. 
The CAV pairs are predetermined by using existing CAV pair association algorithms~\cite{jia2023mass,abdel2021vehicular}. 
a cluster head is selected among the vehicles based on existing vehicle clustering algorithms, to coordinate the communication, computing, and sensing in the vehicle cluster~\cite{wang2018networking}.  
Both BS and RSU provide the edge computing capability, facilitated by co-located edge servers. 
Fig.~\ref{fig:JP7-Scenario_with_BS} illustrates three snapshots for a vehicle cluster moving through an RSU's coverage area. Initially, all the vehicles in the cluster have no access to the RSU but the leading vehicle is about to move into the RSU coverage. Then, the vehicles gradually move into and later leave the RSU coverage. 
Consider a time-slotted system, in which the time slots are indexed by integer $n$. 

Each CAV initiates a perception task in each time slot to identify the surrounding objects, whose results are essential to supporting autonomous driving applications, such as path planning and maneuver control. 
Each CAV pair $k\in\mathcal{K}$ consists of one transmitter CAV and one receiver CAV, and both CAVs share a similar view but from different angles.  
Each CAV pair works in the SP mode by default and in the CP mode by selection. 
Let $\boldsymbol{x}(n)=\left\{ x_k(n), \forall k\in\mathcal{K}\right\}$ be a binary perception mode selection decision vector for all CAV pairs  (also referred to as cooperation decisions for brevity)  at time slot $n$, with $x_k(n)=1$ indicating the CP mode and $x_k(n)=0$ indicating the SP mode for CAV pair $k$.
A CAV pair in the CP mode is referred to as a cooperative CAV pair. 
Let $\mathcal{K}_C(n)$ denote a set of cooperative CAV pairs at time slot $n$, with $\mathcal{K}_C(n) \subset \mathcal{K}$. 

Each HDV occasionally requests an infotainment service, e.g., mobile virtual reality, which is throughput sensitive and computation intensive. For energy and delay efficiency, the computation tasks at an HDV can be offloaded to a more powerful edge server either at a BS or at an RSU, and the data transmission is supported by either vehicle-to-BS (V2B) or vehicle-to-RSU (V2R) communications~\cite{wu2020dynamic}.

\subsection{Perception Task Model}
\label{sec:Environment Perception Task Model}

For environment perception, lightweight data pre-processing algorithms are used to slice the raw sensing data into \emph{object partitions} each containing one object of interest and \emph{background partitions} that contain only background information~\cite{zhang2021elf}. 
An object tracking algorithm associates the object partitions with existing objects in a maintained object tracking list, by comparing the identified and predicted object locations~\cite{wang2022vabus}.  
Only the new objects and the objects with reduced tracking accuracy are further processed by a deep neural network (DNN) for classification~\cite{yang2022flexpatch}. 
For CAV pair $k$, let $W_k(n)$, $W_k^{\mathsf{T}}(n)$, and $W_k^{\mathsf{R}}(n)$ denote the number of objects that require DNN model processing in the overlapping sensing range of both CAVs, in the non-overlapping sensing range of the transmitter CAV, and in the non-overlapping sensing range of the receiver CAV, respectively, at time slot $n$. We refer to the objects in the overlapping and non-overlapping sensing ranges as shared and individual objects respectively. Then, $W_k(n)$ is also referred to as shared workload, and $W_k^{\mathsf{T}}(n)$, $W_k^{\mathsf{R}}(n)$ are referred to as individual workloads, for CAV pair $k$.

At time slot $n$, the object classification tasks for all the objects of CAV pair $k$ should be completed within a time duration of $\Delta$, which is typically smaller than the time slot length, with the consideration of 1) the raw data pre-processing and object tracking procedures before the generation of object classification tasks, and 2) the response time for autonomous driving applications based on the object classification results.

\begin{figure}
\centering
{
\includegraphics[width=0.7\linewidth]{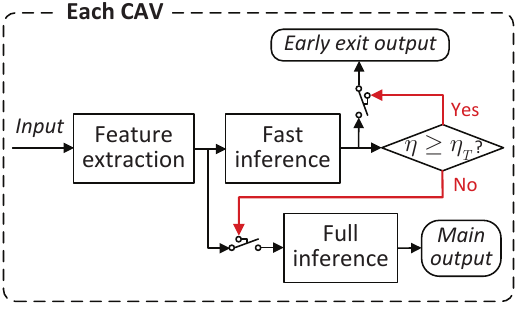}} 
\caption{Object classification by using a default DNN model.}\label{fig:Default-DNN-model}
\end{figure}

\begin{figure}
\centering
{
\includegraphics[width=0.9\linewidth]{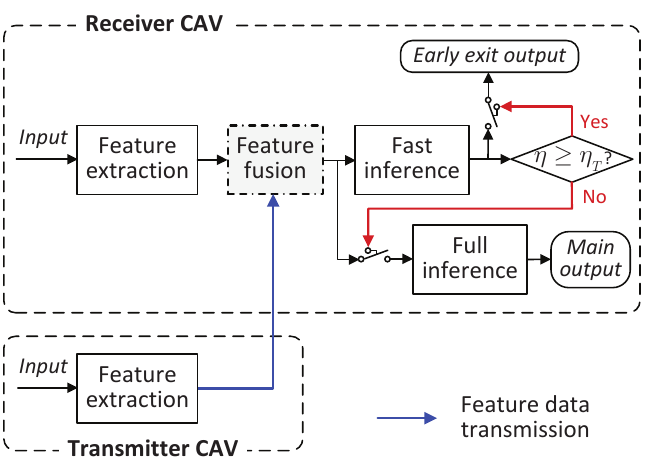}} 
\caption{Object classification by using a feature-fusion DNN model.}\label{fig:Data-fusion-DNN-model}
\end{figure}

\subsubsection{DNN models}

To support the object classification at each CAV pair, we consider a \emph{default DNN model} deployed at both the transmitter and receiver CAVs and a \emph{feature-fusion DNN model} partitioned between them, both employing an early-exit DNN architecture, as illustrated in Fig.~\ref{fig:Default-DNN-model} and Fig.~\ref{fig:Data-fusion-DNN-model} respectively~\cite{teerapittayanon2017distributed,10137743,li2019edge,liu2023resource}. 
As the objects can be processed independently, we consider that both DNN models operate on a per-object basis~\cite{zhang2021elf,wang2023real}. 
Specifically, in the default DNN model, a feature extraction module, mainly composed of convolution (\texttt{CONV}) layers, first generates compressed feature data based on an input of object sensing data. Then, the feature data are further processed by a fast inference module, composed of both \texttt{CONV} and fully-connected (\texttt{FC}) layers, to generate a DNN inference result, referred to as a fast inference result. 
Letting $Z$ denote the total number of object classes, a DNN inference result is a $Z$-dimension estimated class probability vector, where the classification performance is measured by confidence level defined as one minus normalized entropy~\cite{teerapittayanon2017distributed,10137743}. 
A higher confidence level indicates a less uncertain estimation and implies a higher accuracy~\cite{10137743}.
Let $\eta$ denote the confidence level of a fast inference result. If $\eta$ reaches a predetermined threshold, $\eta_T$, an object classification result is obtained at an early exit output.
Otherwise, a full inference module, composed of deeper \texttt{CONV} and \texttt{FC} layers, is triggered to re-process the feature data and generate another DNN inference result, referred to as a full inference result, at a main output.
Let $\rho\in(0,1)$ be the early exit probability for the default DNN model, representing the probability that an object classification result is obtained at the early exit output. 

For a shared object of a CAV pair, the transmitter and receiver CAVs have object sensing data from different viewpoints, implying a potential confidence level gain from data fusion. Such an object can be processed by using the feature-fusion DNN model. 
Specifically, both CAVs process their own object sensing data and extract features based on a feature extraction module. 
Then, the feature data of the transmitter CAV are transmitted via V2V communication to the receiver CAV, where the feature data from both CAVs are fused and processed by the fast inference and selective full inference modules.
The object classification result is obtained at either an early exit output with probability $\tilde{\rho}\in(0,1)$, or a main output with probability $1-\tilde{\rho}$, at the receiver CAV, which is then sent back to the transmitter CAV.  
Typically, we have $\tilde{\rho}>\rho$, as more fast inference results can satisfy the confidence level requirement due to the confidence level gain from data fusion. 

For the DNN models, let $\delta_1$, $\delta_2$, $\delta_3$ and $\delta_4$ be the computing demand (in CPU cycles) for feature extraction, feature fusion, fast inference, and full inference. 
Typically, we have $\delta_2\ll \min\{\delta_3,\delta_4\}$, as the feature fusion module can be implemented by simple operations such as concatenation, maxout, and average operations~\cite{teerapittayanon2017distributed,chen2019f} or lightweight attention schemes~\cite{wang2020v2vnet}. 
The average computing demand for processing one object by the default and feature-fusion DNN models, denoted by $\delta$ and $\tilde{\delta}$ respectively, are given by
\begin{align} 
  \delta &= \delta_1 + \delta_3 + (1-\rho)\delta_4 \\
    \tilde{\delta} &= 2\delta_1 + \delta_2 + \delta_3 + (1-\tilde{\rho})\delta_4.
    \label{eq-compu-demand}
\end{align}

\subsubsection{Cooperative Perception Mode}

For a CAV pair in the CP mode, the shared objects are collaboratively processed by using the feature-fusion DNN model, while the individual objects are independently processed at each CAV by using the default DNN model. 
Consider two CPU cores at each CAV, for processing the shared and individual objects separately using different DNN models. For each CPU core, the dynamic voltage and frequency scaling (DVFS) technique is used to allow on-demand CPU frequency scaling, to support the dynamic perception workloads~\cite{jia2022online}.  

\subsubsection{Stand-Alone Perception Mode}

In the default SP mode, both CAVs in a CAV pair independently perform the object classification tasks. Both the shared and individual objects are processed by using the default DNN model. For ease of analysis, we assume that the shared and individual objects are processed at separate CPU cores, as in the CP mode.  

\subsection{Computing Model}

\begin{table}[t]
\small
\centering
\caption{\scshape{CPU Frequency Configuration for a CAV Pair}}
\label{Table:CPUs}
\begin{tabular}{ |c|c|c|c|c| } 
\hline
\textsl{CPU Core} & \multicolumn{2}{c|}{ Core 1 } & \multicolumn{2}{c|}{  Core 2 }\\ \hline
\textsl{Object Type} & \multicolumn{2}{c|}{ Shared Objects  } & \multicolumn{2}{c|}{  Individual Objects }\\ \hline
\textsl{Perception Mode} & CP & SP & CP & SP \\ \hline
\textsl{CPU Frequency} & \multirow{2}{*}{$f_k(n)$} & \multirow{2}{*}{$f^{\mathsf{D}}_k(n)$} & \multirow{2}{*}{$f_k^{\mathsf{T}}(n)$} & \multirow{2}{*}{$f_k^{\mathsf{T}}(n)$}\\ 
\textsl{at Transmitter CAV} &  &  &  & \\ \hline
\textsl{CPU Frequency} & \multirow{2}{*}{$f_k(n)$} & \multirow{2}{*}{$f^{\mathsf{D}}_k(n)$} & \multirow{2}{*}{$f_k^{\mathsf{R}}(n)$} & \multirow{2}{*}{$f_k^{\mathsf{R}}(n)$}\\ 
\textsl{at Receiver CAV} &  &  &  & \\ \hline
\end{tabular}
\end{table}

For CAV pair $k$, let $\delta_k^{C}(n)$, $\delta_k^{S}(n)$, $\delta_k^{T}(n)$, and $\delta_k^{R}(n)$ be the average total computing demand for processing the shared objects in the CP mode, the shared objects in the SP mode, the transmitter CAV's individual objects, and the receiver CAV's individual objects, at time slot $n$, given by
\begin{align}  
    \delta_k^{C}(n)  =  \tilde{\delta} W_k(n), \quad \delta_k^{S}(n)  =  2\delta W_k(n) 
\end{align}
\begin{align}  
    \delta_k^{T}(n)  =  \delta W_k^{\mathsf{T}}(n), \quad \delta_k^{R}(n)  = \delta W_k^{\mathsf{R}}(n). 
\end{align}
There is a positive total computing demand reduction for CAV pair $k$ through cooperation which increases proportionally to shared workload $W_k(n)$, given by $(2\delta - \tilde{\delta}) W_k(n)   =  \left[\delta_3 + \left( 1 + \tilde{\rho} - 2\rho   \right) \delta_4 -\delta_2\right] W_k(n)>0$.

For CAV pair $k$, let $f_k(n)$, $f_k^{\mathsf{T}}(n)$, and $f_k^{\mathsf{R}}(n)$ denote the CPU frequencies (in Hz or cycle/s) for processing the shared objects at both CAVs, the individual objects at transmitter CAV, and the individual objects at receiver CAV, respectively, at time slot $n$. We have $f_k(n) = f^{\mathsf{D}}_k(n)$ if $x_n(k)=0$, where $f^{\mathsf{D}}_k(n)$ is the CPU frequency for processing the shared objects at both CAVs in the default SP mode. 
Table~\ref{Table:CPUs} summarizes the relationships among CPU cores, object types, perception modes, and CPU frequencies for a CAV pair.  
As the shared and individual objects can be processed in parallel at the separate CPU cores, we have $f^{\mathsf{D}}_k(n) = \frac{  \delta W_k(n)}{\Delta}$, $f_k^{\mathsf{T}}(n)=\frac{\delta W_k^{\mathsf{T}}(n)}{\Delta}$, and $f_k^{\mathsf{R}}(n)=\frac{\delta W_k^{\mathsf{R}}(n)}{\Delta}$
to ensure that the default DNN model processing can be finished within delay bound $\Delta$ for the shared objects in the SP mode, and for the individual objects in either SP or CP mode, without CPU frequency over-provisioning. 
All CPU frequencies are upper limited by a maximum CPU frequency, $f_{\mathsf{M}}$, supported by DVFS, leading to an upper limit, $W_{\mathsf{M}} = \frac{f_{\mathsf{M}} \Delta}{\delta}$, for $W_k(n)$, $W_k^{\mathsf{T}}(n)$, and $W_k^{\mathsf{R}}(n)$.

The total computing energy consumption by all CPU cores of CAV pair $k$ at time slot $n$, denoted by $e_k(n)$, is given by
\begin{align} 
e_k(n) = & ~\kappa \left[ f_k(n)^2 \delta_k^{C}(n)x_k(n) + f_k^{\mathsf{D}}(n)^2 \delta_k^{S}(n)\left(1-x_k(n)\right) \right.\nonumber\\
& \left. + f_k^{\mathsf{T}}(n)^2 \delta_k^{T}(n) + f_k^{\mathsf{R}}(n)^2 \delta_k^{R}(n) \right] 
\label{eq-energy} 
\end{align}
where $\kappa$ is the energy efficiency coefficient of a CPU core~\cite{10137743}. 
From \eqref{eq-energy}, we see that only the portion of computing energy for processing the shared objects depends on cooperation decision $x_k(n)$. 
The computing energy is a comprehensive metric that integrates both CPU frequency and computing demand. 
We characterize the computing efficiency gain of CAV pair $k$ at time slot $n$, denoted by $G_k(n)$, as the reduced amount of computing energy in comparison with that in the default SP mode.  
We have $G_k(n)\equiv0$ for $k \notin\mathcal{K}_C(n)$. For cooperative CAV pair $k\in\mathcal{K}_C(n)$, we have 
\begin{align} 
    &G_k(n) = \kappa f_k^{\mathsf{D}}(n)^2 \delta_k^{S}(n) - \kappa f_k(n)^2 \delta_k^{C}(n) \nonumber\\
    &= 2\kappa  \delta f^{\mathsf{D}}_k(n)^2 W_k(n) - \kappa \tilde{\delta} f_k(n)^2  W_k(n), ~\forall k\in\mathcal{K}_C(n)
    \label{eq-reduced-energy} 
\end{align}
as a decreasing function of $f_k(n)>0$. 
Here, $G_k(n)$ is independent of the individual workloads, as only shared objects are processed differently between the SP and CP modes.

\subsection{Communication Model}

Consider a radio resource pool with total bandwidth $B$ for V2X sidelink communication, which is shared between the V2R transmission from HDVs and the V2V transmission for cooperative CAV pairs, and is non-overlapping with that for V2B communication. 
The radio resource sharing between CAVs and HDVs occur only in the RSU coverage. 
Consider a transmission priority for HDVs, as the CAVs can work in the SP mode without radio resource usage by default. 
Orthogonal frequency division multiplexing (OFDM) based transmission schemes are employed for the V2X sidelink communication. 

Let $B(n)$ be the time-varying available radio spectrum bandwidth for CAVs, with the consideration of dynamic background radio resource usage by HDVs. 
Let $\boldsymbol{\beta}(n)=\left\{ \beta_k(n), \forall k\in\mathcal{K}\right\}$ be a radio resource allocation decision vector for all CAV pairs at time slot $n$, with $\beta_k(n)$ denoting the fraction of available radio resources allocated to CAV pair $k$ for supporting the feature data transmission at time slot $n$.
The average transmission rate for CAV pair $k$ at time slot $n$ is 
\begin{align} 
    R_k(n) =  \beta_k(n) B(n) \log_2 \left( 1+\frac{ p_k g_k(n) }{ \sigma^2} \right), \quad \forall k\in\mathcal{K}
    \label{eq-achieved-rate}
\end{align}
where $p_k$ is the transmit power of CAV pair $k$, $g_k(n)$ is the channel power gain between both CAVs in CAV pair $k$ at time slot $n$, and $\sigma^2$ represents the received noise power. 
Due to high vehicle mobility, we consider only the large-scale channel conditions, specifically the path loss, for the CAV pairs.

\subsection{Delay Model}

Under the assumption of $\max\{W_k(n), W_k^{\mathsf{T}}(n), W_k^{\mathsf{R}}(n)\} \leq W_{\mathsf{M}}$, the CPU frequencies for processing the shared objects in the SP mode and for processing the individual objects in both SP and CP modes can be feasibly scaled up/down to ensure delay satisfaction. 
Here, we focus on the delay performance of the shared objects in the CP mode. 
Let $w$ denote the feature data size in the unit of bit. 
If CAV pair $k$ works in the CP mode, the average object classification delay for each shared object, denoted by $d_k(n)$, depends on the feature-fusion DNN model.  
The delay is composed of 
feature extraction delay $\frac{\delta_1}{f_k(n)}$, feature data transmission delay $\frac{w}{R_k(n)}$, 
feature fusion delay $\frac{\delta_2}{f_k(n)}$, 
and the average inference delay, $\frac{ \delta_3 + \left(1-\tilde{\rho}\right)\delta_4}{f_k(n)}$. 
The delay for sending the classification results is negligible due to the small data size.
For delay satisfaction, $d_k(n)$ should not exceed a per-object delay budget, $\frac{\Delta}{W_k(n)}$, given by
\begin{align} 
    d_k(n) =\frac{w}{R_k(n)} + \frac{ \hat{\delta}}{f_k(n)} \leq \frac{\Delta}{W_k(n)}, \quad \forall k\in\mathcal{K}_C(n)
    \label{eq-ave-delay-budget}
\end{align}
where $\hat{\delta}=\delta_1 + \delta_2 + \delta_3 + \left(1-\tilde{\rho}\right)\delta_4$ is a constant.

\subsection{Generalization}

For computing efficiency gain and perception accuracy enhancement, nearby CAVs can be grouped for cooperative perception by using a $Y$-input feature-fusion DNN model, where $Y$ is a general group size. 
At a given vehicle density, the content similarity within a group tends to reduce as $Y$ increases, due to a lower average percentage of shared objects. 
As only the shared objects can be collaboratively processed by using the feature-fusion DNN model, the reduced content similarity may gradually compromise the total computing efficiency gain and the average perception accuracy enhancement as $Y$ increases. 
Additionally, a larger group size increases the accuracy of shared objects with a diminishing marginal gain. 
However, there is a higher overall communication cost for supporting the feature data transmission of more transmitter CAVs in a larger group. 
For simplicity, we consider $Y=2$ and pair the CAVs based on existing works~\cite{jia2022online,jia2023mass,abdel2021vehicular}. How to select the best group size, $Y$, and how to optimally group the CAVs given $Y$ remain to be investigated in our future work.

\section{2D Performance Region Analysis}
\label{sec:2D Performance Region Analysis}

For problem formulation, we analyze the performance of an arbitrary cooperative CAV pair, $k\in\mathcal{K}_C(n)$, at time slot $n$, under different transmission rates and CPU frequencies for supporting the classification of shared objects. 
The condition for non-negative computing efficiency gain via cooperation, i.e., $G_k(n) \geq 0$, is a CPU frequency requirement, given by
\begin{align} 
    f_k(n) \leq f^{\mathsf{P}}_k(n) = \sqrt{\frac{2\delta}{\tilde{\delta}}}f^{\mathsf{D}}_k(n) = \sqrt{\frac{2\delta^3}{\tilde{\delta}}} \frac{W_k(n)}{\Delta}
\end{align}
where $f^{\mathsf{P}}_k(n)$ corresponds to zero computing efficiency gain and increases proportionally to shared workload $W_k(n)$.
We have $f^{\mathsf{P}}_k(n) > f^{\mathsf{D}}_k(n)$, as $\sqrt{\frac{2\delta}{\tilde{\delta}}}>1$. 
Under assumption $W_k(n)\leq W_{\mathsf{M}}$, there are two cases for the relationship among $f^{\mathsf{D}}_k(n)$, $f^{\mathsf{P}}_k(n)$ and $f_{\mathsf{M}}$, depending on shared workload $W_k(n)$:

\begin{itemize}

    \item \emph{Low shared workload:} For $W_k(n) \leq \sqrt{\frac{\tilde{\delta}}{2\delta}} \frac{f_{\mathsf{M}}\Delta }{\delta}  = \sqrt{\frac{\tilde{\delta}}{2\delta}}W_{\mathsf{M}} $, we have $f^{\mathsf{D}}_k(n) < f^{\mathsf{P}}_k(n) \leq f_{\mathsf{M}}$. 
    For a feasible CPU frequency scale-up from $f^{\mathsf{D}}_k(n)$ to $f_{\mathsf{M}}$, computing efficiency gain $G_k(n)$ transits from positive to negative, with a zero value at $f_k(n)=f^{\mathsf{P}}_k(n)$; 
    
    \item \emph{High shared workload:} For $\sqrt{\frac{\tilde{\delta}}{2\delta}}W_{\mathsf{M}}  < W_k(n) \leq W_{\mathsf{M}}$, we have $f^{\mathsf{D}}_k(n) \leq f_{\mathsf{M}} < f^{\mathsf{P}}_k(n)$. As $f_k(n)$ scales up from $f^{\mathsf{D}}_k(n)$ to $f_{\mathsf{M}}$, $G_k(n)$ decreases but remains positive. 

\end{itemize}

\begin{figure}
    \centering
    \begin{subfigure}[b]{0.24\textwidth}
           \includegraphics[width=1\linewidth]{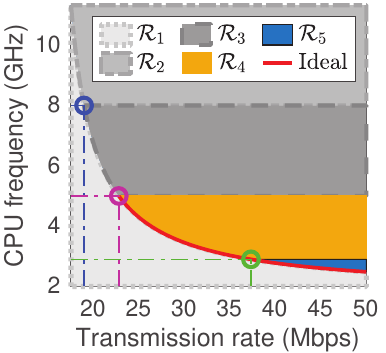} 
        \caption{\small{Low workload}}\label{fig:tradeoff}
    \end{subfigure}
    \begin{subfigure}[b]{0.24\textwidth}
    \includegraphics[width=1\linewidth]{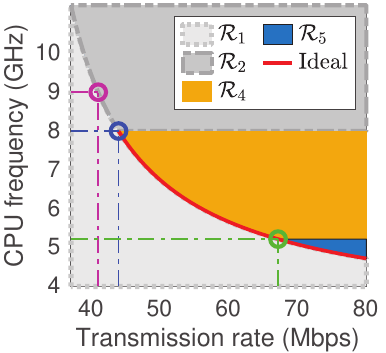} 
    \caption{\small{High workload}}\label{fig:tradeoff_2}
    \end{subfigure}
    \caption{Examples of 2D performance regions for cooperative CAV pair $k$ at a different shared workload, where $\left[R_{\mathsf{M}},f_{\mathsf{M}}\right]$, $\left[R^{\mathsf{P}}_k(n),f^{\mathsf{P}}_k(n)\right]$, and $\left[R^{\mathsf{D}}_k(n),f^{\mathsf{D}}_k(n)\right]$ are indicated by blue, pink, and green circles.}\label{fig:tradeoff_together}
\end{figure}

\begin{table*}[t]
\small
\centering
\caption{\scshape{Summary of Different Performance Regions of a Cooperative CAV Pair}}
\label{Table:performance-regions}
\begin{tabular}{c| c | c |c |c } 
\hline\noalign{\vskip 0.3mm}\hline\noalign{\smallskip}
\textbf{Region} & \textbf{Shared Workload} & \textbf{Delay Performance} & \textbf{CPU Frequency Scaling Range} & \textbf{Computing Efficiency Gain}  \\
\noalign{\smallskip}\hline\noalign{\smallskip}
$\mathcal{R}_1$ & Low/High & Violation & N/A & N/A    \tabularnewline \noalign{\smallskip}\hline\noalign{\smallskip}
$\mathcal{R}_2$ & Low/High & Satisfaction & Infeasible scale-up: $f_k(n) > f_{\mathsf{M}}$ & N/A    \tabularnewline \noalign{\smallskip}\hline\noalign{\smallskip}
$\mathcal{R}_3$ & Low & Satisfaction & Feasible scale-up: $f^{\mathsf{P}}_k(n) < f_k(n) \leq f_{\mathsf{M}}$ & Negative    \tabularnewline \noalign{\smallskip}\hline\noalign{\smallskip}
\multirow{3}{*}{ $\mathcal{R}_4$ } & \multirow{3}{*}{Low/High} & \multirow{3}{*}{Satisfaction}  & Feasible scale-up:  & \multirow{3}{*}{Non-negative}    \tabularnewline 
& & &  $f^{\mathsf{D}}_k(n) < f_k(n) \leq f^{\mathsf{P}}_k(n)$ at a low workload, &  \tabularnewline 
& & &  $f^{\mathsf{D}}_k(n) < f_k(n) \leq f_{\mathsf{M}}$ at a high workload &  \tabularnewline \noalign{\smallskip}\hline\noalign{\smallskip}
$\mathcal{R}_5$ & Low/High & Satisfaction & Default or scale-down: $f_k(n) \leq f^{\mathsf{D}}_k(n)$ & Positive    \tabularnewline 
\noalign{\smallskip}\hline\noalign{\vskip 0.3mm}\hline\noalign{\smallskip}
\end{tabular}
\end{table*}

Let $R_{\mathsf{M}}$, $R^{\mathsf{P}}_k(n)$, and $R^{\mathsf{D}}_k(n)$ be the minimum transmission rates for delay satisfaction if cooperative CAV pair $k$ operates at CPU frequencies $f_{\mathsf{M}}$, $f^{\mathsf{P}}_k(n)$, and $f^{\mathsf{D}}_k(n)$ respectively for processing the shared objects.  
The three rate-frequency pairs, $\left[R_{\mathsf{M}},f_{\mathsf{M}}\right]$, $\left[R^{\mathsf{P}}_k(n),f^{\mathsf{P}}_k(n)\right]$, and $\left[R^{\mathsf{D}}_k(n),f^{\mathsf{D}}_k(n)\right]$, all lie on a curve indicating $d_k(n) = \frac{\Delta}{W_k(n)}$. 
We obtain $R_{\mathsf{M}} = w \Big/\left(\frac{\Delta}{W_k(n)} - \frac{ \hat{\delta}}{f_{\mathsf{M}}}\right)$, $R^{\mathsf{P}}_k(n) = \frac{W_k(n) w \phi}{\left(\phi-1\right)\Delta}$ where $\phi = \sqrt{\frac{2\delta^3}{\hat{\delta}^2\tilde{\delta}}}>1$ is a constant, and $R^{\mathsf{D}}_k(n) = \frac{ W_k(n) w \delta     }{\left[ \left(\tilde{\rho} - \rho\right)\delta_4 - \delta_2 \right] \Delta}$, all increasing with shared workload $W_k(n)$. 
For cooperative CAV pair $k$, there are multiple 2D performance regions with different delay performance, CPU frequency scaling range, and computing efficiency gain, for different combinations of $R_k(n)$ and $f_k(n)$, as summarized in Table~\ref{Table:performance-regions}.  
The number of performance regions depends on the shared workload, as illustrated in Fig.~\ref{fig:tradeoff_together}.

From Fig.~\ref{fig:tradeoff_together} and Table~\ref{Table:performance-regions}, we obtain some useful principles in the resource allocation for each cooperative CAV pair.  
First, 
only the rate-frequency pairs in \textbf{Regions} $\mathcal{R}_4$ and $\mathcal{R}_5$ should be selected for both delay satisfaction and non-negative computing efficiency gain at a feasible CPU frequency. 
Second, a subset of \textbf{Region} $\mathcal{R}_4$ and \textbf{Region} $\mathcal{R}_5$ rate-frequency pairs that lie on curve $d_k(n) = \frac{\Delta}{W_k(n)}$ are the ideal candidate rate-frequency pairs without transmission rate over-provisioning, as indicated by red curves in Fig.~\ref{fig:tradeoff_together}. 
Accordingly, the ideal candidate rate-frequency pairs for each cooperative CAV pair $k\in\mathcal{K}_C(n)$ should satisfy the following constraints, 
\begin{align} 
    f_k(n) \leq \min \left\{ f^{\mathsf{P}}_k(n),  f_{\mathsf{M}} \right\}, \quad \forall k\in\mathcal{K}_C(n) 
    \label{eq-CPU-freq-ideal} \\
    \frac{w}{R_k(n)} + \frac{ \hat{\delta}}{f_k(n)} = \frac{\Delta}{W_k(n)}, \quad \forall k\in\mathcal{K}_C(n).
    \label{eq-delay-equality} 
\end{align}

\section{Problem Formulation}
\label{sec:problem}

Due to the dynamic radio resource availability, the current available radio resources may be insufficient for supporting all CAV pairs to work in the CP mode with delay satisfaction and non-negative computing efficiency gain.  
Thus, an adaptive set of cooperative CAV pairs, $\mathcal{K}_C(n)\subset \mathcal{K}$ should be determined for each time slot.  
Due to environmental changes, the shared workload and the transmitter-receiver distance vary over time for each cooperative CAV pair. 
As the total computing demand reduction through cooperation increases in proportion to the shared workload, a moderate shared workload increase potentially brings more computing efficiency gain. 
However, due to a reduced per-object delay budget, a heavier shared workload requires a higher CPU frequency for delay satisfaction under limited radio resources, compromising the computing efficiency gain. 
For a longer transmitter-receiver distance, the average transmission rate decreases due to a higher path loss, leading to a higher CPU frequency requirement for delay satisfaction, which reduces the computing efficiency gain.
Hence, the dynamics in both shared workloads and transmitter-receiver distances should be considered in the adaptive selection of cooperative CAV pairs, to maximize the total computing efficiency gain. 

Moreover, the cooperation decisions for each CAV pair should not change too frequently, as the switching between the SP and CP modes incur a CPU process switching overhead between scheduling the default and feature-fusion DNN models~\cite{qu2020dynamic1}.  
Let $C(n)$ be the total number of CAV pairs that change the cooperation status at time slot $n$, given by
  \begin{align} 
    C(n) = \sum_{k\in\mathcal{K}} \left| x_k(n) - x_k(n-1) \right|. 
      \label{eq-switch-cost} 
  \end{align} 
The total switching cost increases proportionally to $C(n)$.   
To maximize the total computing efficiency gain while minimizing the total switching cost in the long run, we study a joint adaptive CAV cooperation and resource allocation problem, to adaptively switch between the SP and CP modes and allocate resources among all CAV pairs for each time slot. 
Let $\boldsymbol{f}(n)=\left\{ f_k(n), \forall k\in\mathcal{K}\right\}$ be a CPU frequency allocation decision vector for processing shared objects at all CAV pairs at time slot $n$.  
Let $\dot{\boldsymbol{x}}=\left\{\boldsymbol{x}(n),\forall n\right\}$, $\dot{\boldsymbol{\beta}}=\left\{\boldsymbol{\beta}(n),\forall n\right\}$, $\dot{\boldsymbol{f}} = \left\{\boldsymbol{f}(n),\forall n\right\}$ denote the CAV cooperation, radio resource allocation, and CPU frequency allocation decisions for all time slots. 
Then, the joint problem is formulated as 
\begin{align} 
    \mathbf{P}_0: ~\max_{ \dot{\boldsymbol{x}},\dot{\boldsymbol{\beta}},\dot{\boldsymbol{f}}}   \hspace{0.3cm} &{ \sum_{n} \left[\left(\sum_{k\in\mathcal{K}}G_{k}(n)\right) - \tilde{\omega} C(n)\right] }    \label{eq-obj-joint}\\
    \text{s.t. } \hspace{0.3cm}  & \eqref{eq-achieved-rate}, \eqref{eq-CPU-freq-ideal}, \eqref{eq-delay-equality} \nonumber\\
    &\sum_{k\in\mathcal{K}} \beta_k(n) \leq 1 \label{eq-trans-capacity} \\
    &0 \leq \beta_k(n) \leq x_k(n), ~ k\in\mathcal{K} \label{eq-beta-x-relationship}\\
    & \left(1-x_k(n)\right)f^{\mathsf{D}}_k(n) \leq f_k(n) \nonumber\\
    &\leq \left(1-x_k(n)\right)f^{\mathsf{D}}_k(n) + x_k(n)\mathbb{M}, ~ k\in\mathcal{K}  \label{eq-f-x-relationship}
\end{align}
where $\tilde{\omega}$ is a positive weight that controls the trade-off between gain and cost, and $\mathbb{M}$ is a very large constant.
Among the constraints, \eqref{eq-achieved-rate} is the expression of transmission rate $R_k(n)$, \eqref{eq-CPU-freq-ideal} and \eqref{eq-delay-equality} are conditions for the ideal candidate rate-frequency pairs in the 2D performance regions. 
Constraints~\eqref{eq-trans-capacity} and \eqref{eq-beta-x-relationship} ensure that the total fraction of allocated bandwidth for all CAV pairs does not exceed one, while guaranteeing that no radio resources are allocated to CAV pairs in the SP mode. 
With constraint~\eqref{eq-f-x-relationship}, we have $f_k(n) = f^{\mathsf{D}}_k(n)$ in the SP mode and $0 \leq f_k(n)\leq \mathbb{M}$ in the CP mode for CAV pair $k$.

Let $\ast$ associate the optimal solution to problem $\mathbf{P}_0$. 
Given $\dot{\boldsymbol{x}}^\ast$, the resource allocation decisions can be decoupled among time slots in the objective function and all the constraints. 
Hence, given $\dot{\boldsymbol{x}}^\ast$, $\left(\boldsymbol{\beta}^\ast(n), \boldsymbol{f}^\ast(n)\right)$ must be the resource allocation solution that maximizes the instantaneous total computing efficiency gain, $\sum_{k\in\mathcal{K}}G_{k}(n)$, for time slot $n$, since the total switching cost depends only on $\dot{\boldsymbol{x}}$. 
Therefore, problem $\mathbf{P}_0$ can be decoupled to a long-term optimization subproblem for the adaptive CAV cooperation and a series of instantaneous optimization subproblems for resource allocation, as follows.

\subsection{Resource Allocation Subproblem}
\label{sec:2D Resource Allocation Subproblem}

For time slot $n$, given any CAV cooperation decision $\boldsymbol{x}(n)$, a cooperative CAV pair set, $\mathcal{K}_C(n)$, is determined. 
For CAV pair $k$ in the SP mode, we have $\beta_k(n)=0$, $f_k(n) = f^{\mathsf{D}}_k(n)$ based on \eqref{eq-beta-x-relationship} and \eqref{eq-f-x-relationship}, and $G_{k}(n)=0$. 
Accordingly, a resource allocation subproblem for time slot $n$ is formulated as
\begin{align} 
	\mathbf{P}_1: ~ \max_{ \boldsymbol{\beta}_C(n),\boldsymbol{f}_C(n) }   \hspace{0.3cm} &{    \sum_{k\in\mathcal{K}_C(n)}G_{k}(n) }    \label{eq-total-Gain}\\
	\text{s.t. } \hspace{0.77cm}  & \eqref{eq-achieved-rate}, \eqref{eq-CPU-freq-ideal}, \eqref{eq-delay-equality} \nonumber\\
    & \sum_{k\in\mathcal{K}_C(n)} \beta_k(n) \leq 1 \label{eq-trans-capacity-P1} 
\end{align}
where $\boldsymbol{\beta}_C(n)=\{\beta_k(n),\forall k\in\mathcal{K}_C(n)\}$ and $\boldsymbol{f}_C(n)=\{f_k(n),\forall k\in\mathcal{K}_C(n)\}$ are the resource allocation decisions for cooperative CAV pairs in $\mathcal{K}_C(n)$.  
If problem $\mathbf{P}_1$ is feasible, we have $G^\ast(n)=\sum_{k\in\mathcal{K}_C(n)}G_{k}^\ast(n)$ as the maximal total computing efficiency gain achieved with optimal resource allocation; otherwise, $G^\ast(n)$ is undefined. 

\subsection{Adaptive CAV Cooperation Subproblem}
\label{sec:Adaptive CAV Cooperation Problem}

We formulate an adaptive CAV cooperation subproblem as a multi-agent Markov decision process (MMDP), where each agent corresponds to a CAV pair that makes binary cooperation decisions based on local observations over time.  
We consider cooperative agents in the MMDP, where all CAV pairs collaboratively maximize an expected total discounted reward.  
An MMDP is represented as $\left(\mathcal{K}, \mathcal{S}, \left\{\mathcal{A}_k\right\}, P, R, \left\{\Omega_k\right\}, \gamma\right)$, 
where $\mathcal{K}$ is a set of agents, 
$\mathcal{S}$ is the state space, $\mathcal{A}_k$ is the action space for agent $k$, with $\mathcal{A} = \times_k \mathcal{A}_k$ being the set of joint actions, $P$ is an unknown state transition probability matrix, 
$R: \mathcal{S}\times\mathcal{A}\mapsto \mathbb{R}$ is a reward function, 
$\Omega_k$ is a set of observations for agent $k$,
and $\gamma$ is a discount factor in $[0,1)$.   
The observation, action, and reward of agent $k$ are given as follows.

\begin{itemize}

	\item \emph{Observation}: 
    The local observation for agent $k$ at time slot $n$, denoted by $o_k^{(n)}$, includes the available radio spectrum bandwidth for CAVs, $B(n)$, shared workload $W_k(n)$, transmitter-receiver distance $D_k(n)$, and previous cooperation status $x_k(n-1)$, given by
	\begin{align} 
		o_k^{(n)} = \left\{ B(n), W_k(n), D_k(n), x_k(n-1) \right\};
	    \label{eq-original-obs} 
	\end{align}

	\item \emph{Action}: At time slot $n$, agent $k$'s action is the cooperation decision, $x_k(n)$.   
	Joint action $\boldsymbol{x}(n)=\{x_k(n),\forall k\}$ determines a cooperative CAV pair set, $\mathcal{K}_C(n)$; 

	\item \emph{Reward}: 
	The reward for any agent at time slot $n$, denoted by $r^{(n)}$, is given by
	\be 
	r^{(n)} =  \left\{ \hspace{0mm}
	\begin{array}{rcl}
	G^\ast(n) - \tilde{\omega} C(n)   \text{,}      & {\mbox{if $\mathbf{P}_1$ is feasible}}\\
	{P \text{,} \hspace{20mm}}    & {\mbox{otherwise\hspace{10mm}}}
	\label{eq-sys-reward}
	\end{array} \right.
	\ee
	where $G^\ast(n)$ is the maximal total computing efficiency gain associated with optimal resource allocation for $\mathcal{K}_C(n)$. 
    If problem $\mathbf{P}_1$ is infeasible under the selected joint action, a negative penalty, $P$, is used as the reward. 
    As both the objective function and the delay constraint in problem $\mathbf{P}_1$ are derived based on the early exit probabilities of DNN models, we aim to learn the \emph{statistically} optimal adaptive CAV cooperation actions in the long run by using the reward function in \eqref{eq-sys-reward}.

\end{itemize}

\section{Model-Assisted Multi-Agent Reinforcement Learning Solution}
\label{sec:solution}

We propose a model-assisted learning solution to the joint problem. First, a model-based optimal solution is derived for the resource allocation subproblem. Then, a model-assisted MARL approach that relies on the model-based optimal resource allocation solution for reward calculation is proposed, to solve the MMDP for adaptive CAV cooperation. 

\subsection{Optimal Resource Allocation Solution}
\label{sec:KKT Solution for 2D Resource Allocation Optimization}

With~\eqref{eq-reduced-energy}, maximizing $\sum_{k\in\mathcal{K}_C(n)}G_{k}(n)$ in $\mathbf{P}_1$ is equivalent to minimizing $\sum_{k\in\mathcal{K}_C(n)}  W_k(n)f_k(n)^2$.
By writing $\beta_k(n)$ as a function of $R_k(n)$, we combine constraints \eqref{eq-achieved-rate} and \eqref{eq-trans-capacity-P1} as 
\begin{align} 
	\sum_{k\in\mathcal{K}_C(n)} \frac{R_k(n)}{B(n) \log_2 \left( 1+ p_k g_k(n) / \sigma^2 \right)} \leq 1.
    \label{eq-rate-capacity} 
\end{align}
From constraint \eqref{eq-delay-equality}, $R_k(n) = w \big/\left(\frac{\Delta}{W_k(n)} - \frac{ \hat{\delta}}{f_k(n)}\right)$ is a function of $f_k(n)$. 
Substituting $R_k(n)$ in \eqref{eq-rate-capacity}, we transform \eqref{eq-rate-capacity} into a constraint on decision variables $\boldsymbol{f}_C(n)$, given by
\begin{align} 
	h( \boldsymbol{f} ) = \sum_{k\in\mathcal{K}_C(n)} \frac{c_k}{b_k - \hat{\delta}/f_k(n) }  - 1 \leq 0
    \label{eq-CPU-total-capacity} 
\end{align}
where $b_k = \frac{\Delta}{W_k(n)}$ and  $c_k = \frac{w}{B(n) \log_2 \left( 1+ p_k g_k(n) / \sigma^2 \right)}$ are known parameters given network status for time slot $n$.  
Here, $h( \boldsymbol{f} )$ is a monotonically decreasing constraint function of $\boldsymbol{f}_C(n)$, defined in domain $\{f_k(n) >  \frac{\hat{\delta}}{b_k}, \forall k\in\mathcal{K}_C(n)\}$ to ensure $R_k(n)>0$ for $k\in\mathcal{K}_C(n)$. 
Let $f^0_k(n) = \min \left\{ f^{\mathsf{P}}_k(n),  f_{\mathsf{M}} \right\}$, which is known to CAV pair $k$, 
and let $\boldsymbol{f}_C^0(n)=\left\{ f^0_k(n), \forall k\in\mathcal{K}_C(n)\right\}$. 
Then, problem $\mathbf{P}_1$ is transformed to a CPU frequency allocation problem, given by
\begin{align} 
	\mathbf{P}_2: \hspace{0.4cm} \min_{ \boldsymbol{f}_C(n) }   \hspace{0.4cm} &{   \sum_{k\in\mathcal{K}_C(n)} W_k(n) f_k(n)^2  }   
	\label{eq-obj} 
	\\
	\text{s.t. } \hspace{0.39cm}  
 	& \boldsymbol{f}_C(n) \preceq \boldsymbol{f}_C^0(n) 
 	\label{eq-CPU-upper-bound}
 	\\
    & h( \boldsymbol{f} ) \leq 0.
    \label{eq-constraint-func-inequality} 
\end{align}
\begin{Thm}
\emph{Problem $\mathbf{P}_2$ is convex, and strong duality holds if the problem is feasible under condition $h( \boldsymbol{f}^0 ) < 0$.} 
\end{Thm}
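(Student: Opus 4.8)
The plan is to prove the two claims separately: that $\mathbf{P}_2$ is a convex program, and that Slater's condition holds so that strong duality follows. For the convexity part, the objective $\sum_{k\in\mathcal{K}_C(n)} W_k(n)f_k(n)^2$ is a nonnegatively weighted sum of convex quadratics (recall $W_k(n)>0$), hence convex, and the box constraint \eqref{eq-CPU-upper-bound} is affine, so the only nontrivial point is the convexity of the feasible region carved out by \eqref{eq-constraint-func-inequality}.

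For that, since $h(\boldsymbol{f})$ in \eqref{eq-CPU-total-capacity} is separable, it suffices to show each summand $\psi_k(f)=c_k/(b_k-\hat{\delta}/f)=c_k f/(b_k f-\hat{\delta})$ is convex on its domain $f>\hat{\delta}/b_k$. I would do this either by the composition rule---$\psi_k$ is the convex nonincreasing map $t\mapsto 1/t$ composed with the concave map $f\mapsto b_k-\hat{\delta}/f$---or by a direct second-derivative computation giving $\psi_k''(f)=2b_k c_k\hat{\delta}/(b_k f-\hat{\delta})^3>0$ on the domain (using $b_k,c_k,\hat{\delta}>0$ and $b_k f-\hat{\delta}>0$ there). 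Then $h$ is convex, the sublevel set $\{h(\boldsymbol{f})\le 0\}$ is convex, and its intersection with the affine box \eqref{eq-CPU-upper-bound} is a convex feasible set; together with the convex objective, $\mathbf{P}_2$ is convex.

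For strong duality I would invoke the refined Slater condition: the box constraints being affine, it suffices to exhibit a point in the relative interior of $\mathrm{dom}\,h$ that is feasible for the box and satisfies $h(\boldsymbol{f})<0$ strictly. The point $\boldsymbol{f}=\boldsymbol{f}_C^0(n)$ works: it meets \eqref{eq-CPU-upper-bound} with equality, and by hypothesis $h(\boldsymbol{f}^0)<0$; this hypothesis also presumes $\boldsymbol{f}^0\in\mathrm{dom}\,h$, i.e. $f^0_k(n)>\hat{\delta}/b_k$ for all $k$, which is consistent with the model (positivity of $R^{\mathsf{P}}_k(n)$ and $R_{\mathsf{M}}$, using $\delta>\hat{\delta}$ as implied by $R^{\mathsf{D}}_k(n)>0$ in Section~\ref{sec:2D Performance Region Analysis}). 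Hence Slater's condition holds and, $\mathbf{P}_2$ being a feasible convex program, strong duality follows; in fact $h(\boldsymbol{f}^0)<0$ already implies feasibility since $\boldsymbol{f}^0$ is feasible. I expect the main obstacle to be the convexity of $h$---the sign of $\psi_k''$ on the restricted domain---while a minor subtlety worth flagging is that the Slater point must actually lie in $\mathrm{dom}\,h$, i.e. yield strictly positive transmission rates.
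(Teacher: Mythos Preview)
Your proposal is correct, and for the convexity claim it is essentially identical to the paper's argument: the paper also checks that the objective is a convex quadratic, that \eqref{eq-CPU-upper-bound} is linear, and that $h$ is convex via the same second-derivative computation $\partial^2 h/\partial f_k^2 = 2c_kb_k\hat{\delta}/(b_kf_k-\hat{\delta})^3>0$ on the domain $f_k>\hat{\delta}/b_k$.

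For strong duality, your route is slightly more direct than the paper's. You invoke the refined Slater condition and use $\boldsymbol{f}^0$ itself as the certificate: it satisfies the affine box constraint (equality is allowed there) and, by hypothesis, satisfies the nonlinear constraint strictly. The paper instead first proves an auxiliary lemma---that constraint \eqref{eq-constraint-func-inequality} is necessarily active at any optimum---and then argues that, since $h(\boldsymbol{f}^\ast)=0$ while $h(\boldsymbol{f}^0)<0$, there must exist an intermediate point $\boldsymbol{f}^\diamond$ with $\boldsymbol{f}^\ast\preceq\boldsymbol{f}^\diamond\preceq\boldsymbol{f}^0$ and $h(\boldsymbol{f}^\diamond)<0$. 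Your approach avoids this detour; the paper's lemma, however, is independently useful later (it underpins the binary-search termination criterion $h(\hat{\boldsymbol{f}}^\ast)\approx 0$), so its appearance in the proof is not wasted effort. Your remark that $h(\boldsymbol{f}^0)<0$ already implies feasibility is also a nice observation that the paper leaves implicit.
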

The proof of Theorem 1 is given in Appendix. 
Based on Theorem 1, Karush-Kuhn-Tucker (KKT) conditions are necessary and sufficient conditions for optimal solution to problem $\mathbf{P}_2$~\cite{boyd2004convex}.
The Lagrangian of $\mathbf{P}_2$ is 
\begin{align} 
	\mathcal{L}\left( \boldsymbol{f}, \boldsymbol{\lambda}, \nu \right) = &\sum_{k\in\mathcal{K}_C(n)} W_k(n)f_k(n)^2  \nonumber\\
	& + \sum_{k\in\mathcal{K}_C(n)} \lambda_k \left( f_k(n) - f^0_k(n)  \right) + \nu \hspace{0.6mm} h( \boldsymbol{f} )
    \label{eq-Lagrangian}
\end{align}
where $\boldsymbol{\lambda}=\{\lambda_k, \forall k\in\mathcal{K}_C(n)\}$ and $\nu$ are dual variables. 
Its gradient with respect to primal variable $f_k(n)$ is given by
\begin{align} 
	\frac{\partial \mathcal{L}\left( \boldsymbol{f}, \boldsymbol{\lambda}, \nu \right)}{\partial f_k(n)} \hspace{-0.5mm}= \hspace{-0.5mm} 2 W_k(n)f_k(n) \hspace{-0.5mm}+\hspace{-0.5mm} \lambda_k \hspace{-0.5mm}-\hspace{-0.5mm} \nu \frac{c_k \hat{\delta}}{\left( b_k f_k(n) - \hat{\delta}\right)^2}. \hspace{-2mm}
    \label{eq-Lag-gradient}
\end{align}
Let $\boldsymbol{f}^\ast$ be a primal optimal point and $\left(\boldsymbol{\lambda}^\ast, \nu^\ast\right)$ be a dual optimal point. 
Then, the KKT conditions for the optimal solution to problem $\mathbf{P}_2$ are given by
\begin{subequations}  
\be
    & & \hspace{-1.3cm}
    f_k^\ast \leq f^0_k, \quad \forall k\in\mathcal{K}_C
    \label{KKT-1}
	\\
    & & \hspace{-1.3cm}
 	h( \boldsymbol{f}^\ast ) \leq 0 
    \label{KKT-2}
    \\
	& & \hspace{-1.3cm}
    \lambda_k^\ast \geq 0, \quad \forall k\in\mathcal{K}_C
    \label{KKT-3}
    \\
    & & \hspace{-1.3cm}
 	\nu^\ast \geq 0
    \label{KKT-4}
    \\
    & & \hspace{-1.3cm}
 	\lambda_k^\ast \left( f_k^\ast - f^0_k\right) = 0, \quad \forall k\in\mathcal{K}_C
    \label{KKT-5}
    \\
    & & \hspace{-1.3cm}
 	2 W_k f_k^\ast + \lambda_k^\ast - \nu^\ast \frac{c_k \hat{\delta}}{\left( b_k f_k^\ast - \hat{\delta}\right)^2} = 0, ~ \forall k\in\mathcal{K}_C
    \label{KKT-6}
\ee
\label{P-resource-subroutine}
\end{subequations}
\hspace{-3.5mm} where time slot index $n$ is omitted for brevity. 
Among all conditions, \eqref{KKT-1} and \eqref{KKT-2} are primal feasible conditions, \eqref{KKT-3} and \eqref{KKT-4} are dual feasible conditions, \eqref{KKT-5} indicates the complementary slackness, and \eqref{KKT-6} ensures that the Lagrangian gradient in \eqref{eq-Lag-gradient} vanishes at $\boldsymbol{f}^\ast$ as $\boldsymbol{f}^\ast$ minimizes $\mathcal{L}\left( \boldsymbol{f}, \boldsymbol{\lambda}^\ast, \nu^\ast \right)$~\cite{boyd2004convex}.
From \eqref{KKT-5} and \eqref{KKT-6}, we obtain 
\begin{align} 
	 \left(  \frac{\nu^\ast c_k \hat{\delta}}{\left( b_k f_k^\ast - \hat{\delta}\right)^2} - 2 W_k f_k^\ast \right) \left( f_k^\ast - f^0_k\right) = 0,  \forall k\in\mathcal{K}_C.   
    \label{eq-KKT-5-6}
\end{align}

\begin{figure}
\centering
{
\includegraphics[width=0.8\linewidth]{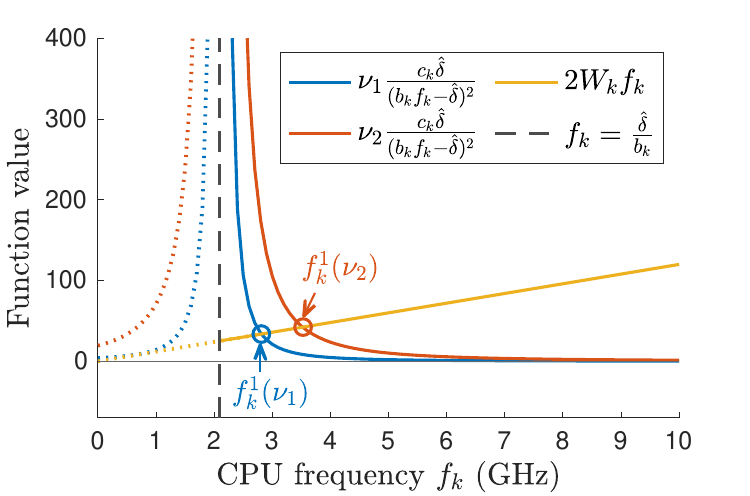}} 
\caption{An illustration of the root of function $S\left(f_k, \nu\right)$ for $\nu_1<\nu_2$.}\label{fig:S_func_root}
\end{figure}

For dual variable $\nu$, let $S\left(f_k, \nu\right) = \nu \frac{  c_k \hat{\delta}}{\left( b_k f_k - \hat{\delta}\right)^2} - 2 W_k f_k$, which is a monotonically decreasing function in domain $f_k >  \frac{\hat{\delta}}{b_k}$. 
Let $f_k^1(\nu)$ be the root of $S\left(f_k, \nu\right)$, 
which corresponds to the intersection point of functions $\nu \frac{c_k \hat{\delta}}{\left( b_k f_k - \hat{\delta}\right)^2}$ and $2 W_k f_k$ in domain $f_k >  \frac{\hat{\delta}}{b_k}$. 
Fig.~\ref{fig:S_func_root} illustrates the root of function $S\left(f_k, \nu\right)$ for $\nu_1<\nu_2$. 
We see that, for a smaller value of dual variable $\nu$, the root of $S\left(f_k, \nu\right)$, i.e., $f_k^1(\nu)$, has a smaller value. 
According to \eqref{KKT-6}, we have $\lambda_k^\ast = S\left(f_k^\ast, \nu^\ast\right)$, and condition \eqref{eq-KKT-5-6} is rewritten as
\begin{align} 
	 S\left(f_k^\ast, \nu^\ast\right) \left( f_k^\ast - f^0_k\right) = 0, \quad \forall k\in\mathcal{K}_C.   
    \label{eq-KKT-5-6-rewritten}
\end{align} 
Then, the primal optimal point, $\boldsymbol{f}^\ast$, that minimizes the objective function in \eqref{eq-obj} is given by
\be 
f_k^\ast =  \left\{ \hspace{0mm}
\begin{array}{rcl}
f_k^1(\nu^\ast)  \text{,}      & {\mbox{if } f_k^1(\nu^\ast) \leq f_k^0 }\\
f_k^0 \text{,\hspace{0.65cm}}   & {\mbox{if } f_k^0 < f_k^1(\nu^\ast) }
\label{eq-f_opt}
\end{array} \right.
\ee
for any $k\in\mathcal{K}_C$, to guarantee condition \eqref{eq-KKT-5-6-rewritten} and ensure that $\lambda_k^\ast = S\left(f_k^\ast, \nu^\ast\right) \geq 0$ in \eqref{KKT-3}. 
Specifically, if $f_k^1(\nu^\ast) \leq f_k^0$, we have $\lambda_k^\ast=0$; otherwise, we have $\lambda_k^\ast>0$.

As $\nu^\ast$ is an unknown optimal dual variable in~\eqref{eq-f_opt}, we will find the primal optimal point, $\boldsymbol{f}^\ast$, by exploring the possible values of $\nu^\ast$. 
Let $\hat{\nu}^\ast$ be a candidate value of $\nu^\ast$, and let $\hat{\boldsymbol{f}}^\ast$ be the corresponding candidate value of $\boldsymbol{f}^\ast$ for $\nu^\ast = \hat{\nu}^\ast$ based on~\eqref{eq-f_opt}.  
For a smaller $\hat{\nu}^\ast$ value, $\hat{f}_k^1(\hat{\nu}^\ast)$ is smaller, then $\hat{f}_k^\ast$ is potentially smaller according to \eqref{eq-f_opt}, leading to a potentially smaller objective value in \eqref{eq-obj}. 
However, as $h( \hat{\boldsymbol{f}}^\ast )$ is a decreasing function of $\hat{f}_k^\ast$, the primal feasible condition in \eqref{KKT-2} might be violated if the value of $\hat{\nu}^\ast$ is too small. 
Therefore, 
we should find a minimum non-negative value for $\hat{\nu}^\ast$ that satisfies \eqref{KKT-2} and \eqref{KKT-4}, to obtain $\nu^\ast$ and $\boldsymbol{f}^\ast$.

To solve problem $\mathbf{P}_2$, the feasibility is first checked by calculating $h( \boldsymbol{f}^0 )$. 
If $h( \boldsymbol{f}^0 ) > 0$, the problem is infeasible; if $h( \boldsymbol{f}^0 ) = 0$, we can directly obtain the optimal solution as $\boldsymbol{f}^\ast = \boldsymbol{f}^0$; if $h( \boldsymbol{f}^0 ) < 0$, we continue to use a binary-search method to iteratively find $\nu^\ast$ in a gradually reduced interval $[\nu_L, \nu_R]$. 
The detailed algorithm is described as follows. 

	For initialization, as $f_k^\ast \in \left(\frac{\hat{\delta}}{b_k}, f_k^0\right]$, $\nu_L$ and $\nu_R$ are set to satisfy $S\left(\frac{\hat{\delta}}{ \max_{k\in\mathcal{K}_C} b_k }+\epsilon, \nu_L \right) = 0$ and  $S\left( \min_{k\in\mathcal{K}_C} f_k^0, \nu_R\right) = 0$. 
	Here, $\epsilon$ is a very small number satisfying $0< \epsilon \ll 1$. 
	The shared workloads among all cooperative CAV pairs determine the initial binary-search interval, $[\nu_L, \nu_R]$.   
	In each iteration, $\hat{\nu}^\ast$ is set as $\frac{\nu_L+\nu_R}{2}$, and $\hat{\boldsymbol{f}}^\ast$ is obtained based on \eqref{eq-f_opt}. 
	For $\boldsymbol{f} = \hat{\boldsymbol{f}}^\ast$, let $\hat{G}^\ast = \sum_{k\in\mathcal{K}_C}\hat{G}^\ast_{k}$ denote the corresponding total computing efficiency gain in \eqref{eq-total-Gain}. 
	Constraint \eqref{KKT-2} is checked by calculating $h( \hat{\boldsymbol{f}}^\ast )$. Whether or not the binary search ends at the current iteration and how to update interval $[\nu_L, \nu_R]$ for the next iteration depend on $h( \hat{\boldsymbol{f}}^\ast )$: 

	\begin{itemize}

	\item If $h( \hat{\boldsymbol{f}}^\ast )=0$, the optimal points, $\nu^\ast$ and $\boldsymbol{f}^\ast$, are obtained as $\hat{\nu}^\ast$ and $\hat{\boldsymbol{f}}^\ast$, and the algorithm is ideally finished. In practice, 
	we set a stopping criteria, $-10^{-4}<h( \hat{\boldsymbol{f}}^\ast )<0$, and obtain asymptotically optimal primal and dual variables when the binary search ends; 

	\item If $h( \hat{\boldsymbol{f}}^\ast )>0$, constraint \eqref{KKT-2} is infeasible, and the candidate value for $\nu^\ast$ should be increased in the next iteration, thus we set $\nu_L = \hat{\nu}^\ast$; 

	\item If $h( \hat{\boldsymbol{f}}^\ast )<0$, the candidate value for $\nu^\ast$ can be further reduced in the next iteration to increase $h( \hat{\boldsymbol{f}}^\ast )$ and reduce the objective value in \eqref{eq-obj}, thus we set $\nu_R = \hat{\nu}^\ast$. 

	\end{itemize}

Such a centralized iterative algorithm can be executed at the cluster head which is responsible for collecting the overall network dynamics in the vehicle cluster.

\subsection{Model-Assisted Multi-Agent Reinforcement Learning}

We use a model-assisted MARL algorithm to solve the MMDP for adaptive CAV cooperation. 
To address the inherent nonstationary issue due to the partial observability at each agent, we use a multi-agent deep deterministic policy gradient (MADDPG) algorithm which adopts a centralized training distributed execution (CTDE) framework~\cite{lowe2017multi,tian2021multiagent}.
The model-assisted MADDPG algorithm is presented in Algorithm~\ref{alg:MADDPG}.  
Each agent trains a critic network and an actor network based on the global state and joint action in a centralized training stage, and uses the trained actor network for decision based on local observation in a distributed execution stage. 
To enhance the observability at each agent, we augment the local observation by two additional elements, i.e., the average workload and the average transmitter-receiver distance among all agents, both of which can be provided by the cluster head~\cite{ye2021joint}.
In this manner, a CAV pair has both local information and some statistical global information for better-informed decisions during the distributed execution stage, without acquiring the full global state.       
Let $s_k^{(n)}$ be the augmented local observation for agent $k$ at time slot $n$, given by
$s_k^{(n)}  = \left\{ o_k^{(n)}, \frac{\sum_{k\in\mathcal{K}} W_{k}(n) }{K}, \frac{\sum_{k\in\mathcal{K}} D_{k}(n) }{K}\right\}$. 
Let $\boldsymbol{s}^{(n)} = \{ s_k^{(n)}, \forall k\in\mathcal{K} \}$ be the global state at time slot $n$. 
   
As the MMDP has a discrete action space for each agent, specifically a binary action space for whether or not to cooperate, a Gumbel-Softmax estimator is used by each agent to allow the backpropagation of gradients through the actor network during training~\cite{lowe2017multi}. 
For agent $k$, an actor network, $\mu_k(s_k)$, parameterized by weights $\boldsymbol{\varphi}_k$, is trained to learn a continuous action, $a_k = \{a_{k,j}, j = 0,1\}$, based on augmented local observation $s_k$. Here, $a_k$ is an estimated two-dimension Gumbel-Softmax distribution over the binary action space~\cite{jang2016categorical}. 
Let $a_k^{(n)} =\{a_{k,j}^{(n)}, j = 0,1\}$ be the continuous action of agent $k$ at time slot $n$, and let $\boldsymbol{a}^{(n)}=\{a_k^{(n)},\forall k\}$ be the joint continuous action at time slot $n$. 
Agent $k$ obtains the cooperation decision as the binary action with the maximum Gumbel-Softmax probability, i.e., $x_k(n) = \arg \max_{j=0,1} a_{k,j}^{(n)}$.

In addition to actor network $\mu_k(s_k)$, agent $k$ trains a critic network parameterized by weights $\boldsymbol{\theta}_k$ to approximate a centralized $Q$-function, $Q_k(\boldsymbol{s},\boldsymbol{a}) = \mathbb{E}\left[\sum_{n=0}^{N-1} \gamma^{n} r_k^{(n)} \big| \boldsymbol{s}, \boldsymbol{a}   \right]$, that takes global state $\boldsymbol{s}$ and joint action $\boldsymbol{a}$ as input to estimate a $Q$-value, where $N$ is the maximum number of learning steps in an episode. 
In the training stage of such an actor-critic framework, although the agents independently take actions based on the augmented local observations, they evaluate the actions and refine the policies by taking the actions of other agents into consideration in $Q_k(\boldsymbol{s},\boldsymbol{a})$, thus facilitating a collaborative exploration of the vehicular network environment to maximize a collective reward. 
To overcome the divergence update issue, agent $k$ also has a target critic network, $\hat{Q}_k(\boldsymbol{s},\boldsymbol{a})$, parameterized by $\hat{\boldsymbol{\theta}}_k$, and a target actor network, $\hat{\mu}_k(s_k)$, parameterized by $\hat{\boldsymbol{\varphi}}_k$, with delayed updates. 
Agent $k$ initializes the primary and target critic and actor networks with random weights before training (\emph{line} 1) and then continually updates the weights until convergence.
MADDPG employs a soft updating strategy, where agent $k$ updates weights $\hat{\boldsymbol{\theta}}_k$ and $\hat{\boldsymbol{\varphi}}_k$ of the target networks in each learning step (\emph{line} 13) as
\begin{align} 
	 \hat{\boldsymbol{\theta}}_k = \xi \boldsymbol{\theta}_k + \left(1-\xi \right)\hat{\boldsymbol{\theta}}_k ~~\mbox{and}~~
	 \hat{\boldsymbol{\varphi}}_k = \xi \boldsymbol{\varphi}_k + \left(1-\xi \right)\hat{\boldsymbol{\varphi}}_k 
    \label{eq-target-update}
\end{align}
with $\xi$ being the soft updating rate of the target networks. 

\begin{algorithm}[t]
\small
\DontPrintSemicolon
\SetKwInOut{Input}{Input}{}
\SetKwInOut{Output}{Output}
\SetKwInput{Define}{Define}
\SetKwInput{Let}{Let}
\SetKwInput{Find}{Find}
\tcc{Centralized Training Stage}
All agents initialize networks with random weights. \\
\For{each episode }{ 
    Initialize local observation $o_k^{(0)}$ for $k\in\mathcal{K}$.\\ 
    \For{learning step $n$}{  
        \For{agent $k$}{ 
            Send local observation $o_k^{(n)}$ to cluster head.\\
            Collect augmented local observation $s_k^{(n)}$. \\
            Decide continuous action $a_k^{(n)} = \mu_k\left(s_k^{(n)}\right)$, derive binary cooperation decision $x_k(n)$, and send $a_k^{(n)}$ to cluster head.\\
        }
        Cluster head solves the resource allocation subproblem, obtains $G^\ast(n)$ and $r^{(n)}$, and broadcasts $\boldsymbol{s}^{(n)}$, $\boldsymbol{a}^{(n)}$, and $r^{(n)}$ to all agents.  \\ 
        \For{agent $k$}{ 
            Add $\left(\boldsymbol{s}^{(n-1)}, \boldsymbol{a}^{(n-1)}, r^{(n-1)}, \boldsymbol{s}^{(n)}\right)$ to buffer $\mathcal{B}$ if $n\geq 1$.\\ 
            Sample mini-batch of experiences from $\mathcal{B}$. \\
            Update primary and target critic and actor networks based on \eqref{eq-critic-update}, \eqref{eq-actor-update}, and \eqref{eq-target-update}.
        }
    }
}   
\tcc{Distributed Execution Stage}
\For{each episode }{ 
    \For{learning step $n$}{  
        \For{agent $k$}{ 
            Send local observation $o_k^{(n)}$ to cluster head.\\
            Collect augmented local observation $s_k^{(n)}$. \\
            Decide continuous action $a_k^{(n)} = \mu_k\left(s_k^{(n)}\right)$, derive binary cooperation decision $x_k(n)$, and send $x_k(n)$ to cluster head.\\
        }
        Cluster head allocates resources to cooperative CAV pairs.  \\ 
    }
}   
\caption{A Model-Assisted MADDPG Algorithm}
\label{alg:MADDPG}
\end{algorithm}

The agents interact with the network environment in a sequence of episodes, each containing a finite number of learning steps, one learning step for one time slot. 
An episode starts when a vehicle cluster is about to move into an RSU's coverage area and ends once it leaves the coverage. 
At the beginning of each episode, each agent initializes the local observation (\emph{line} 3).
At the beginning of time slot $n$, each agent $k$ sends local observation $o_k^{(n)}$ via a dedicated control channel to the cluster head (\emph{line} 6), which calculates and then returns the augmented information.  
Agent $k$ collects augmented local observation $s_k^{(n)}$ (\emph{line} 7), based on which the agent decides continuous action $a_k^{(n)}$ as $\mu_k\left(s_k^{(n)}\right)$ by the primary actor network, and then discretizes it into a binary cooperation decision, $x_k(n)$ (\emph{line} 8). 
All agents send the continuous actions to the cluster head (\emph{line} 8).  
The cluster head allocates resources to cooperative CAV pairs, determines the maximal total computing efficiency gain, $G^\ast(n)$, using the model-based resource allocation solution, and calculates reward $r^{(n)}$ in \eqref{eq-sys-reward} which is then broadcast to all agents together with global state $\boldsymbol{s}^{(n)}$ and joint action $\boldsymbol{a}^{(n)}$ (\emph{line} 9). 
Then, each agent adds a new transition tuple $\left(\boldsymbol{s}^{(n-1)}, \boldsymbol{a}^{(n-1)}, r^{(n-1)}, \boldsymbol{s}^{(n)}\right)$ to an experience replay buffer, $\mathcal{B}$, if $n\geq 1$ (\emph{line} 11).

To train the critic and actor networks at each learning step, each agent samples a mini-batch of $I$ experiences from $\mathcal{B}$, among which $\left(\boldsymbol{s}^{(i)}, \boldsymbol{a}^{(i)}, r^{(i)}, \boldsymbol{s}^{(i+1)}\right)$ represents the $i$-th experience (\emph{line} 12). 
Agent $k$ updates the critic network by minimizing a loss function, $\mathbb{L}_k(\boldsymbol{\theta}_k) = \frac{1}{I}\sum_{i=1}^I \left[ y_k^{(i)} - Q_k\left( \boldsymbol{s}^{(i)},\boldsymbol{a}^{(i)} \right) \right]^2$,  
where $y_k^{(i)} = r^{(i)} + \gamma \hat{Q}_k\left( \boldsymbol{s}^{(i+1)}, \hat{\mu}_1\left( s_1^{(i+1)} \right), \dots, \hat{\mu}_K\left( s_K^{(i+1)} \right) \right)$ is a target value estimated by the target critic and actor networks. 
Weights $\boldsymbol{\theta}_k$ are updated via a gradient descent (\emph{line} 13), given by
\begin{align} 
	\boldsymbol{\theta}_k \leftarrow \boldsymbol{\theta}_k - \alpha_{\boldsymbol{\theta}}  \nabla_{\boldsymbol{\theta}_k} \mathbb{L}_k(\boldsymbol{\theta}_k)
    \label{eq-critic-update}
\end{align}
where $\alpha_{\boldsymbol{\theta}}$ is the learning rate for critic networks. 
The actor network of agent $k$ aims to maximize a long-term total expected reward, $J_k(\boldsymbol{\varphi}_k) = \mathbb{E}\left[\sum_{n=0}^{N} \gamma^{n} r_k^{(n)}\right]$, which is the expected $Q$-value among all state-action pairs, i.e.,  $J_k(\boldsymbol{\varphi}_k) = \mathbb{E}_{\boldsymbol{s},\boldsymbol{a}} Q_k(\boldsymbol{s},\boldsymbol{a})$. 
Thus, 
weights $\boldsymbol{\varphi}_k$ of the actor network are updated via a gradient ascent (\emph{line} 13), given by
\begin{align} 
	\boldsymbol{\varphi}_k \leftarrow \boldsymbol{\varphi}_k + \alpha_{\boldsymbol{\varphi}}  \nabla_{\boldsymbol{\varphi}_k} J_k(\boldsymbol{\varphi}_k)
    \label{eq-actor-update}
\end{align}
where $\alpha_{\boldsymbol{\varphi}}$ is the learning rate for actor networks, and the gradient of $J_k(\boldsymbol{\varphi}_k)$ is given by
\begin{align} 
	\nabla_{\boldsymbol{\varphi}_k} J_k(\boldsymbol{\varphi}_k) = &\frac{1}{I}\sum_{i=1}^I \left[\nabla_{\boldsymbol{\varphi}_k} \mu_k\left( s_k^{(i)} \right) \nabla_{a_k} Q_k\left( \boldsymbol{s}^{(i)}, a_1^{(i)}, \right.\right. \nonumber\\ 
	& \left.\left. \dots, a_k, \dots, a_K^{(i)} \right) \Big|_{a_k = \mu_k\left( s_k^{(i)} \right)} \right]. 
\end{align} 

After the centralized training stage, each agent uses the trained actor network for decision in a distributed execution stage, with the assistance of the cluster head (\emph{lines} 14-20).

\section{Simulation Results}
\label{sec:Simulation Results}

\subsection{Simulation Setup}

\begin{table}[t]
\small
\centering
\caption{\scshape{System parameters in simulation}}
\begin{tabular}{ p{0.62\columnwidth} | c } 
\hline\noalign{\vskip 0.3mm}\hline\noalign{\smallskip}
Parameters & Value \\
\noalign{\smallskip}\hline\noalign{\smallskip}
Center frequency ($f_c$)&  $6$ GHz \tabularnewline
Noise power ($\sigma^2$) & $-104$ dBm \tabularnewline 
Transmit power ($p_k$) & $23$ dBm \tabularnewline 
Maximum local CPU frequency ($f_{\mathsf{M}}$) & $8$ GHz \tabularnewline
Energy efficiency coefficient ($\kappa$) & $10^{-28}$ $\mathsf{J}/\mathsf{s}/\mathsf{Hz}^3$ \tabularnewline
Feature extraction computing demand ($\delta_1$) & $4\times 10^{6}$ cycles \tabularnewline
Feature fusion computing demand ($\delta_2$) & $1000$ cycles \tabularnewline
Fast inference computing demand ($\delta_3$) & $3.1\times 10^{5}$ cycles \tabularnewline
Full inference computing demand ($\delta_4$) & $7.7\times 10^{7}$ cycles \tabularnewline
Feature data size ($w$) & $0.29$ Mbits \tabularnewline
Default early exit probability ($\rho$) & $0.3$ \tabularnewline
Feature-fusion early exit probability ($\tilde{\rho}$) & $0.6$ \tabularnewline
\hline\noalign{\vskip 0.3mm}\hline\noalign{\smallskip}
\end{tabular}
\label{Table:System parameters in simulation}
\end{table}

We consider a four-lane unidirectional highway, where $K\in\{2,3,4,5,6\}$ CAV pairs are moving together with $10$ HDVs in a vehicle cluster, with an intermittent RSU coverage.  
The RSU is $10$ $m$ away from the highway and provides a communication radius of 250 $m$.
In the RL task, we consider a $1500$ $m$ highway segment for each episode, during which a vehicle cluster moves for a distance of $1000$ $m$ through the RSU coverage. 
The vehicle speed is uniformly set in a range of $[23,27]$ $m/s$. 
The time slot length is $500$ $ms$. Each episode spans over an average time duration of $40$ $s$ and contains $80$ time slots on average. 
The delay requirement for object classification in each time slot is $\Delta = 100$ $ms$. 

We use the Simulation of Urban Mobility (SUMO) traffic simulator to simulate the vehicle trajectories in each episode, based on which the transmitter-receiver distances of each CAV pair can be obtained~\cite{SUMO,ye2021joint,abdel2021vehicular}. 
To obtain the time-varying channel power gain for each CAV pair, we use the 3GPP NR-V2X 37.885 highway case for the V2V link path loss calculation~\cite{3gpp.37.885}. 
The path loss in $dB$ for CAV pair $k$ during time slot $n$ is calculated as $L_{dB}(n) = 32.4 + 20\log_{10} D_k(n) + 20\log_{10} f_c$, 
where $D_k(n)$ is the transmitter-receiver distance in meter, and $f_c$ is the center frequency in GHz.
For CAV pair $k$, the transitions of shared workload, $W_k(n)$, across different time slots follow a Markov chain with states in $\{4,5,6,7,8\}$. 
When residing in the RSU coverage,  
each HDV generates V2R transmission requests in each time slot according to a $Bernoulli(0.5)$ distribution, with each V2R transmission request occupying a bandwidth of $0.5$ MHz.  
With a total bandwidth of $B=10.5$ MHz for V2X sidelink communication, the available radio spectrum bandwidth for V2V transmission at time slot $n$ is $B(n) = B - 0.5 M(n)$, where $M(n)$ is the number of HDVs that request V2R transmission at time slot $n$. 
On average, the $B(n)$ value in an episode follows a decreasing-then-increasing trend when the vehicle cluster drives through the RSU coverage. 
Other system parameters are given in Table~\ref{Table:System parameters in simulation}.

We implement both the iterative algorithm for optimal resource allocation and the MADDPG algorithm for adaptive CAV cooperation using Python 3.9.2. The learning modules are implemented using TensorFlow 2.11.0.   
Each learning agent has two hidden layers with $(64,64)$ neurons and \texttt{Relu} activation functions in critic and actor networks. The critic network has a one-dimension output with no activation function, and the actor network has a two-dimension output with $\texttt{Gumbel-SoftMax}$ activation. 
We set weight $\tilde{\omega} \in [0,1]$  
and penalty $P = -10$ in reward function \eqref{eq-sys-reward}, and use $\alpha_{\boldsymbol{\theta}} = 10^{-2}$ and $\alpha_{\boldsymbol{\varphi}}=10^{-3}$ as the critic and actor learning rates, a soft updating rate of $\xi=0.01$ for target network update, and discount factor $\gamma=0.95$. 
For training at each learning step, a mini-batch of $I=1024$ experiences are sampled from buffer $\mathcal{B}$ with size $100000$.

\subsection{Performance Evaluation}

We first evaluate the performance of the optimal resource allocation solution, for a set, $\mathcal{K}_C$, of cooperative CAV pairs with an available bandwidth of $10.5$ MHz for V2V transmission. 
Without loss of generality, all cooperative CAV pairs have an identical shared workload, $W$. 
We examine the impact of $|\mathcal{K}_C|$ and $W$ on the total computing efficiency gain. 

\begin{figure}
    \centering
    \begin{subfigure}[b]{0.38\textwidth}
      \includegraphics[width=1\linewidth]{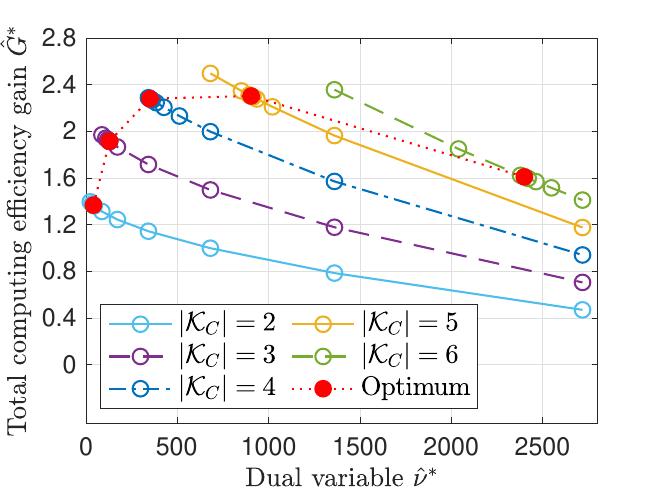} 
      \caption{}\label{fig:KKT_vs_mu_homo_CAVnum_obj}
    \end{subfigure}
    ~~~ 
    \begin{subfigure}[b]{0.38\textwidth}
  \includegraphics[width=1\linewidth]{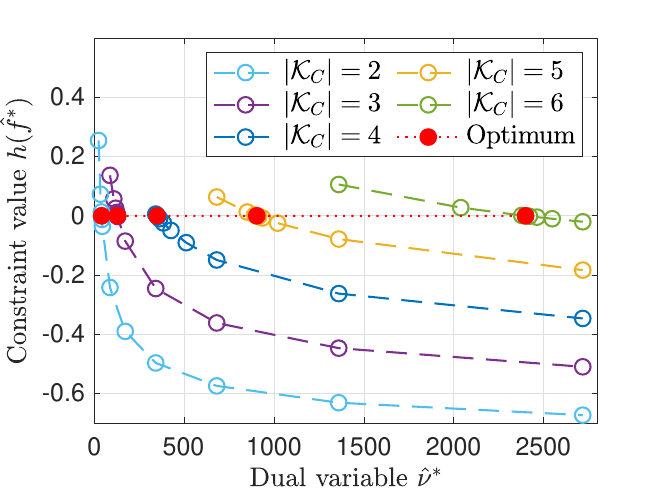} 
  \caption{}\label{fig:KKT_vs_mu_homo_CAVnum_const}
    \end{subfigure}
  \caption{Performance of the optimal resource allocation solution for a different number of cooperative CAV pairs ($|\mathcal{K}_C|$) at $W=6$. (a) Total computing efficiency gain. (b) Constraint value. }\label{fig:KKT_CAVnum}
\end{figure}

\begin{figure}
    \centering
    \begin{subfigure}[b]{0.38\textwidth}
      \includegraphics[width=1\linewidth]{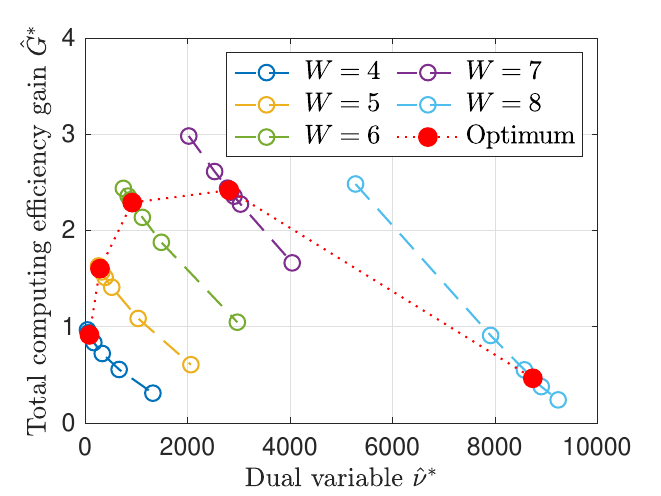} 
      \caption{}\label{fig:KKT_vs_mu_homo_workload_obj}
    \end{subfigure}
    ~~~ 
    \begin{subfigure}[b]{0.38\textwidth}
  \includegraphics[width=1\linewidth]{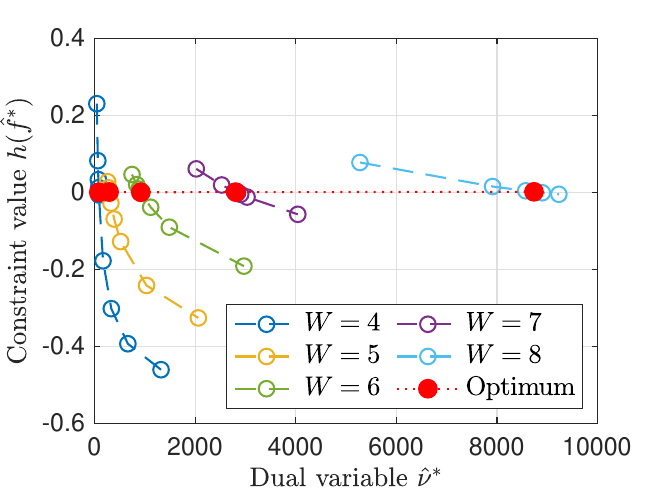} 
  \caption{}\label{fig:KKT_vs_mu_homo_workload_const}
    \end{subfigure}
  \caption{Performance of the optimal resource allocation solution for a different workload ($W$) at $|\mathcal{K}_C|=5$. (a) Total computing efficiency gain. (b) Constraint value.}\label{fig:KKT_workload}
\end{figure}

In the first set of simulations, the performance of the optimal resource allocation solution is evaluated for $|\mathcal{K}_C| \in\{2,3,4,5,6\} $, with $W=6$. 
The transmitter-receiver distances of all the cooperative CAV pairs are set as $20$ $m$.  
For the constant workload, the initial binary-search interval, $[\nu_L, \nu_R]$, is the same for all the $|\mathcal{K}_C|$ values. 
Fig.~\ref{fig:KKT_CAVnum} shows the variations of total computing efficiency gain $\hat{G}^\ast$ and constraint value $h( \hat{\boldsymbol{f}}^\ast)$ during the binary-search of candidate optimal dual variable $\hat{\nu}^\ast$, for each $|\mathcal{K}_C|$.
The asymptotically optimal total computing efficiency gain and constraint value, $G^\ast$ and $h( \boldsymbol{f}^\ast)$, obtained at an asymptotically optimal dual variable, $\nu^\ast$, are represented by a red dot for each $|\mathcal{K}_C|$ value. 
As $|\mathcal{K}_C|$ increases, the radio resources are shared among more cooperative CAV pairs for feature data transmission, and each cooperative CAV pair should increase the CPU frequency to compensate for the lower average transmission rate, to achieve delay satisfaction. 
Accordingly, as $|\mathcal{K}_C|$ increases, the asymptotically optimal CPU frequency allocation variables, $\boldsymbol{f}^\ast$, are larger, corresponding to a larger $\nu^\ast$ value. 
Although the total reduced amount of computing demand increases in proportion to $|\mathcal{K}_C|$, the total computing efficiency gain gradually decreases as the CPU frequency further increases, leading to a first-increasing-then-decreasing trend of $G^\ast$, as shown in Fig.~\ref{fig:KKT_CAVnum}(a). 
At the asymptotically optimal points, constraint value $h( \boldsymbol{f}^\ast)$ approaches zero, as shown in Fig.~\ref{fig:KKT_CAVnum}(b).

In the second set of simulations, the performance of the optimal resource allocation solution is evaluated for $|\mathcal{K}_C|=5$, with shared workload $W\in\{4,5,6,7,8\}$, as shown in Fig.~\ref{fig:KKT_workload}. 
The transmitter-receiver distances of the $5$ cooperative CAV pairs are set as $[20.4, 16.5, 11.4, 29.7, 28.3]$ $m$, respectively. 
For a heavier workload, there is a right shift for the initial binary-search interval, $[\nu_L, \nu_R]$, as both $\nu_L$ and $\nu_R$ have a larger value. 
When $W$ increases, the total computing demand reduction increases proportionally, while the per-object delay budget, $\frac{\Delta}{W}$, decreases inverse proportionally. 
With a constant radio spectrum bandwidth, the CPU frequency should be increased at each CAV pair to satisfy the more stringent delay requirement as $W$ increases. 
Hence, in Fig.~\ref{fig:KKT_workload}(a), we observe a first-increasing-then-decreasing trend for $G^\ast$, due to a trade-off between computing demand and CPU frequency. 
Fig.~\ref{fig:KKT_workload}(b) shows that the constraint value, $h( \boldsymbol{f}^\ast)$, approaches zero at the asymptotically optimal points. 
Fig.~\ref{fig:KKT_CAVnum} and Fig.~\ref{fig:KKT_workload} demonstrate that, with a limited amount of radio resources, it is necessary to select the best subset of CAV pairs for cooperation while taking the shared workload into account, to improve the total computing efficiency gain.

\begin{figure}
    \centering
    \begin{subfigure}[b]{0.38\textwidth}
        \includegraphics[width=1\linewidth]{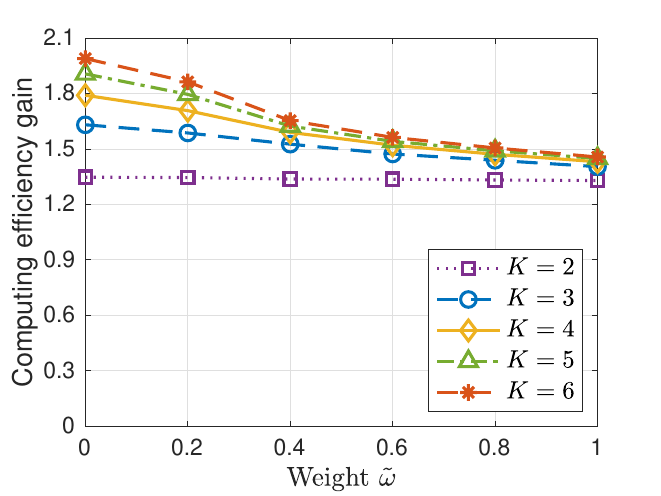} 
        \caption{}\label{fig:BF_weight_Gain}
    \end{subfigure}
    \begin{subfigure}[b]{0.38\textwidth}
    \includegraphics[width=1\linewidth]{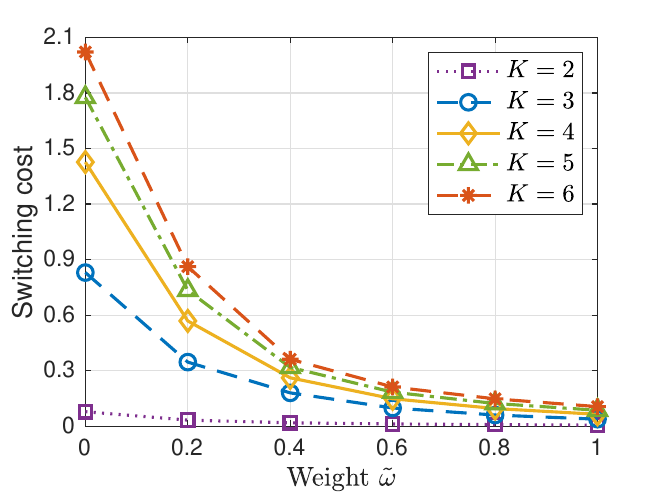} 
    \caption{}\label{fig:BF_weight_Cost}
    \end{subfigure}
    \caption{An illustration of gain-cost trade-off for different $\tilde{\omega}$ values. (a) Computing efficiency gain. (b) Switching cost.}\label{fig:BF_weight}
\end{figure}

Before the performance evaluation of the MADDPG algorithm for adaptive CAV cooperation, we examine the impact of weight $\tilde{\omega}$ in reward function~\eqref{eq-sys-reward} on the trade-off between computing efficiency gain and switching cost. We set $\tilde{\omega}\in\{0,0.2,0.4,0.6,0.8,1\}$. 
For each $\tilde{\omega}$ value, a group of experiments are performed for a different number of CAV pairs ($K$). In each experiment, a brute-force search is conducted among all possible CAV cooperation decisions for a maximum instantaneous reward in each time slot of $2000$ episodes. Fig.~\ref{fig:BF_weight} shows the slot-average performance in terms of computing efficiency gain and switching cost for different $K$ values as $\tilde{\omega}$ increases. As a larger $\tilde{\omega}$ value puts more emphasis on minimizing the switching cost, we observe a decreasing trend for both gain and cost with the increase of $\tilde{\omega}$. We also observe higher gain and cost for more CAV pairs at a given $\tilde{\omega}$ value, which is to be discussed later. The weight, $\tilde{\omega}$, can be selected according to the desired gain-cost trade-off. In the following, $\tilde{\omega}$ is set to $0.4$, as we observe that the cost is reduced by more than $80\%$ at a gain loss of less than $20\%$ in comparison with that achieved at $\tilde{\omega}=0$, for $K=6$.    

\begin{figure}
    \centering
    \begin{subfigure}[b]{0.38\textwidth}
      \includegraphics[width=1\linewidth]{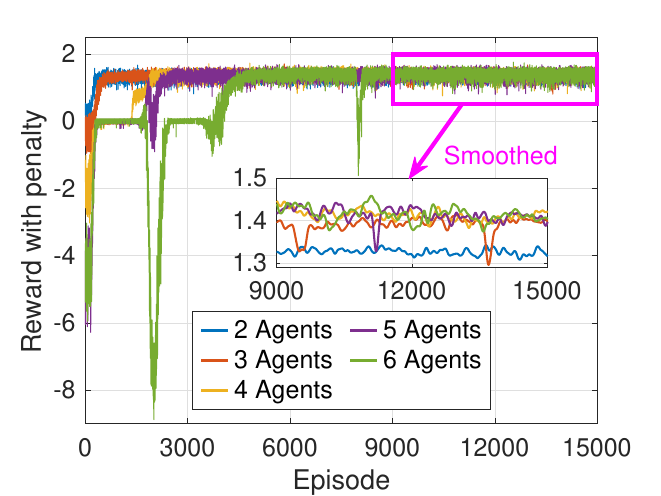} 
      \caption{}\label{fig:Convergence_reward_wi_penalty_diff_agents}
    \end{subfigure}
    ~~~
    \begin{subfigure}[b]{0.38\textwidth}
  \includegraphics[width=1\linewidth]{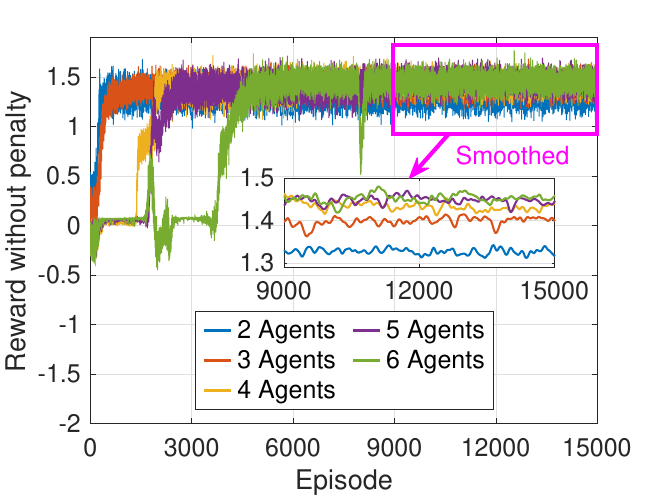} 
  \caption{}\label{fig:Convergence_reward_wo_penalty_diff_agents}
    \end{subfigure}
  \caption{Convergence of the reward during the training process. (a) With penalty. (b) Without penalty.}\label{fig:Convergence_rewards_diff_agents}
\end{figure}

Next, we evaluate the convergence of the MADDPG algorithm, in terms of both reward and training loss, for a different number of CAV pairs. 
Fig.~\ref{fig:Convergence_rewards_diff_agents} shows the convergence of the average reward per learning step over $15000$ episodes, for a different agent number ($K$) from $2$ to $6$. 
In a practical implementation, when there is a negative penalty in the reward due to infeasible resource allocation, we can refine the joint action and let all the CAV pairs work in the default SP mode, which gives a zero computing efficiency gain and a refined reward without penalty. 
The original reward with penalty and the original action are used during training to guide the learning agents towards a least penalty after convergence, while the reward without penalty is the true reward for the CAV pairs when the action refinement is enabled. 
Fig.~\ref{fig:Convergence_rewards_diff_agents} shows the average rewards with and without penalty during the training process. 
As $K$ increases, both rewards converge in a slower speed, indicated by a later increase to a converged value interval. 
By comparing the smoothed rewards with and without penalty after convergence, we see that the penalty has been effectively suppressed, indicated by the limited large negative glitches in the reward with penalty. It implies that the learning agents have collaboratively learn the adaptive CAV cooperation decisions that are feasible for resource allocation under the network dynamics. 
Moreover, we observe that more learning agents tend to improve the reward, as to be discussed.

Fig.~\ref{fig:Convergence_loss_diff_agents} shows the per-agent average critic and actor loss during training for $K$ from $2$ to $6$. 
For the critic network, the input dimension grows linearly with $K$. Thus, it is more difficult to minimize the critic loss if there are more agents, leading to a higher average critic loss after convergence as $K$ increases, as shown in Fig.~\ref{fig:Convergence_loss_diff_agents}(a).
As the reward tends to increase for more agents, as shown in Fig.~\ref{fig:Convergence_rewards_diff_agents}, the average $Q$-value for the sampled mini-batch of experiences at each learning step increases for more agents. As the actor loss is the negative average $Q$-value, there is a lower average actor loss after convergence as $K$ increases, as shown in Fig.~\ref{fig:Convergence_loss_diff_agents}(b).

\begin{figure}
    \centering
    \begin{subfigure}[b]{0.38\textwidth}
      \includegraphics[width=1\linewidth]{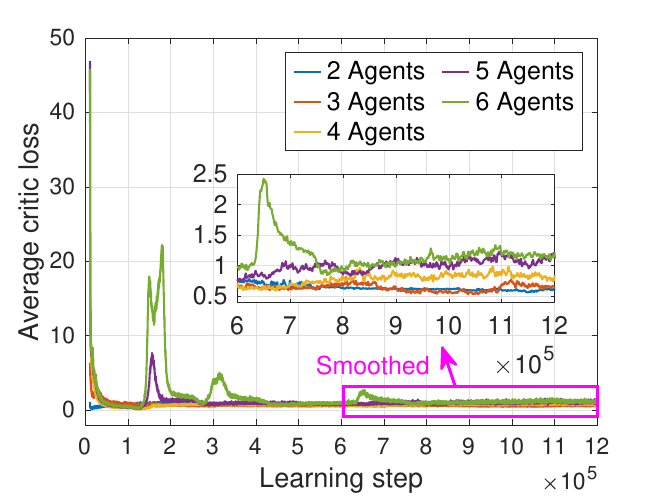} 
      \caption{}\label{fig:CConvergence_critic_loss_diff_agents}
    \end{subfigure}
    ~~~
    \begin{subfigure}[b]{0.38\textwidth}
  \includegraphics[width=1\linewidth]{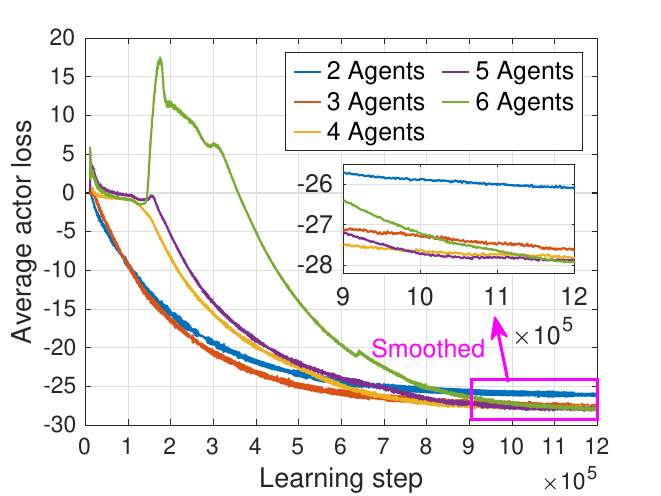} 
  \caption{}\label{fig:Convergence_actor_loss_diff_agents}
    \end{subfigure}
  \caption{Convergence of the critic loss and actor loss during the training process. (a) Average critic loss. (b) Average actor loss.}\label{fig:Convergence_loss_diff_agents}
\end{figure}

To evaluate the effectiveness of the MADDPG algorithm in improving the computing efficiency gain and reducing the switching cost, we compare the performance between the trained MADDPG algorithm and three benchmark algorithms, for a different number of CAV pairs ($K$). 
The first benchmark is a random CAV cooperation scheme, in which each CAV pair switches between SP and CP modes at random in each time slot. 
Action refinement is enabled.  
In the second benchmark, all CAV pairs always cooperate if there exists a feasible resource allocation solution among them. Otherwise, action refinement is triggered to let all CAV pairs work in the default SP mode. Hence, the solution switches between all CAV pairs working in the CP mode and all CAV pairs working in the SP mode, referred to as ``all CP mode'' and ``all SP mode'' respectively.  
In the third benchmark, we conduct a step-wise brute-force search among all candidate joint CAV cooperation decisions in each time slot, for a maximum instantaneous reward. 
The time complexity for the brute-force benchmark is $2^K$ times of that for a trained MADDPG algorithm, for solving $2^K$ resource allocation subproblems given each candidate joint CAV cooperation decision. 
By using a trained MADDPG algorithm, only one resource allocation subproblem is solved given one learned joint action. 
For performance comparison, the $25\%$, $50\%$, and $75\%$ percentiles of the slot-average total computing efficiency gain, slot-average total switching cost, and slot-average reward are evaluated for each $K$ value using the four algorithms, as shown in Fig.~\ref{fig:Gain_Cost_Reward_comp_diff_agent}.

\begin{figure}
    \centering
    \begin{subfigure}[b]{0.46\textwidth}
      \includegraphics[width=1\linewidth]{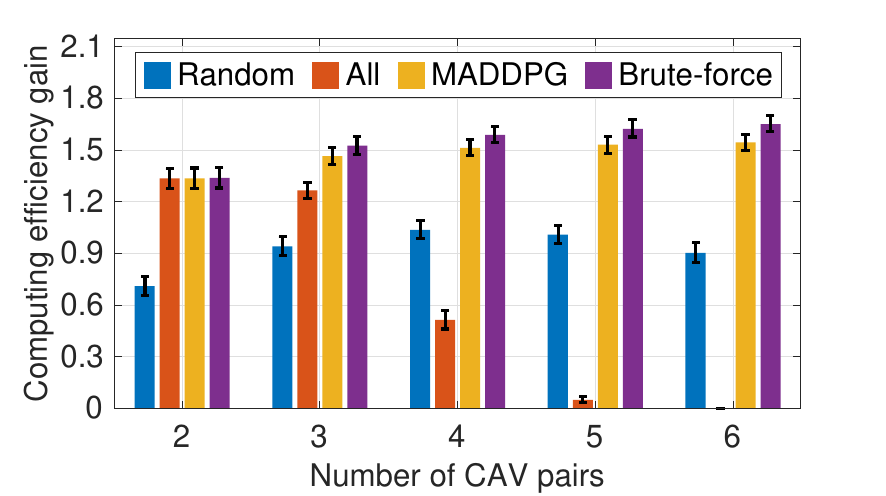} 
      \caption{}\label{fig:Gain_comp_diff_agent}
    \end{subfigure}
    \begin{subfigure}[b]{0.46\textwidth}
  \includegraphics[width=1\linewidth]{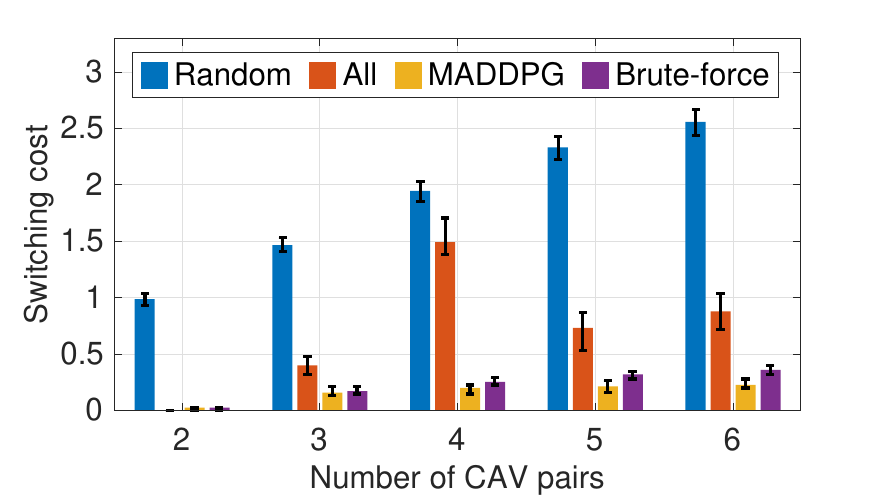} 
  \caption{}\label{fig:Cost_comp_diff_agent}
    \end{subfigure}
    \begin{subfigure}[b]{0.46\textwidth}
  \includegraphics[width=1\linewidth]{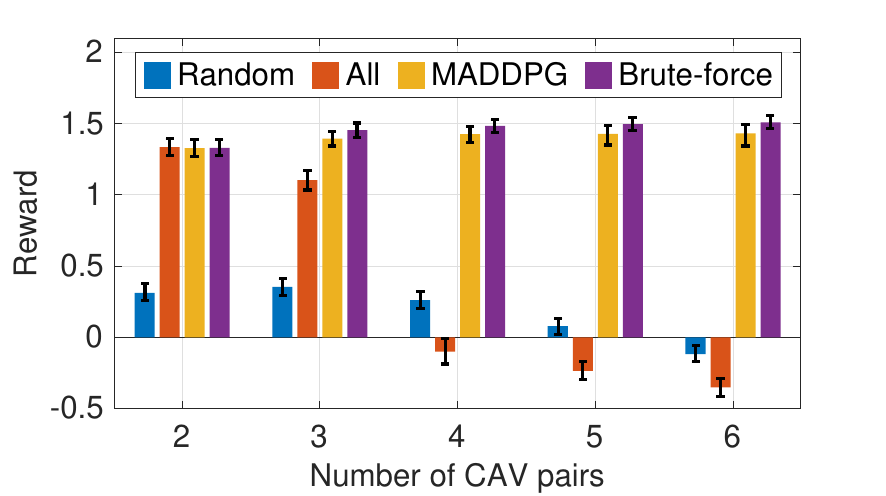} 
  \caption{}\label{fig:Reward_comp_diff_agent}
    \end{subfigure}
    \caption{Performance comparison between the proposed and benchmark algorithms. (a) Computing efficiency gain. (b) Switching cost. (c) Reward.}\label{fig:Gain_Cost_Reward_comp_diff_agent}
\end{figure}

Fig.~\ref{fig:Gain_Cost_Reward_comp_diff_agent}(a) demonstrates an increasing trend of the computing efficiency gain as the number of CAV pairs increases for both the MADDPG algorithm and the brute-force benchmark, both of which significantly outperform the other two benchmarks.
Fig.~\ref{fig:KKT_CAVnum}(a) demonstrates the first-increasing-then-decreasing computing efficiency gain when more CAV pairs cooperate. 
Accordingly, for the random benchmark in which the average number of selected cooperative CAV pairs increases linearly with $K$, we observe such a trend in the computing efficiency gain as $K$ increases, with a turning point at $K=4$. 
The second benchmark achieves a gradually degraded computing efficiency gain when $K$ increases, until obtaining a zero gain at $K=6$. For a larger $K$ value, the chance for infeasible resource allocation among all CAV pairs increases, leading to more frequent action refinement to the ``all SP mode'' with zero gain. 
Nevertheless, in both the MADDPG algorithm and the brute-force benchmark, as there are more candidate CAV pairs in the system, there is a higher flexibility in selecting the best subset of CAV pairs for cooperation, by considering their shared workloads and transmitter-receiver distances. This flexibility contributes to a further increase in the computing efficiency gain as $K$ further increases from $4$ to $6$. 
We observe a decreasing speed in the increase of computing efficiency gain as $K$ further increases, as the gain from the higher flexibility gradually saturates. 
Especially, it is more difficult for the MADDPG algorithm to find the optimal solution 
in a distributed fashion as the agent number, $K$, increases, leading to a slightly larger sub-optimality gap from the brute-force benchmark. 

From Fig.~\ref{fig:Gain_Cost_Reward_comp_diff_agent}(b), we observe an almost linear increasing switching cost for the random benchmark, while both the MADDPG algorithm and the brute-force benchmark can significantly reduce the switching cost. 
Especially, as the switching cost has time correlation, the MADDPG algorithm which can minimize the total switching cost in the long run achieves a lower average switching cost in comparison with the step-wise brute-force benchmark which does not take the future states into account.  
For the second benchmark, as $K$ increases, the dominant solution gradually changes from the ``all CP mode'' to the ``all SP mode'', due to a higher action refinement probability. Accordingly, the highest switching cost is obtained at a medium $K$ value. 
For both the MADDPG algorithm and the brute-force benchmark, 
due to a higher flexibility in cooperative CAV pair selection as $K$ increases, there are more changes in the selection decision, leading to an increasing switching cost as $K$ increases. 
The reward, which is a linear combination of the computing efficiency gain and the switching cost, is shown in Fig.~\ref{fig:Gain_Cost_Reward_comp_diff_agent} (c) for reference.

\section{Conclusion}
\label{Conclusion}

In this paper, we develop an adaptive cooperative perception framework for CAVs in a moving mixed-traffic vehicle cluster, while considering the dynamic shared workloads and channel conditions due to vehicle mobility, dynamic radio resource availability, and intermittent RSU coverage. 
A model-assisted multi-agent reinforcement learning solution is developed, which integrates learning-based adaptive CAV cooperation decision over time with model-based resource allocation decision in each time slot. 
Simulation results demonstrate the necessity for dynamically activating the cooperative perception among CAV pairs to improve the total computing efficiency gain. 
The effectiveness of the model-assisted MADDPG algorithm is verified, in improving the total computing efficiency gain under a limited switching cost. 
In future works, we will explore the generalization of the learning model for adaptive CAV cooperation, while considering a varying vehicle cluster size and a non-stationary network environment.

\section*{Appendix: Proof of Theorem 1}
\label{sec: Appendix}
\setcounter{equation}{0}
\renewcommand{\theequation}{A\arabic{equation}}

\begin{Lem}
\emph{Constraint \eqref{eq-constraint-func-inequality} must be active for an optimal solution to problem $\mathbf{P}_2$.} 
\end{Lem}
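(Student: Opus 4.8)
The plan is to prove the lemma by contradiction, exploiting the strict monotonicity of the objective of $\mathbf{P}_2$ together with the (already established) monotonicity of the constraint function $h(\cdot)$. I would first record that an optimal solution exists whenever $\mathbf{P}_2$ is feasible: the box constraint $\boldsymbol{f}_C(n)\preceq\boldsymbol{f}_C^0(n)$ bounds the feasible set from above, while $h(\boldsymbol{f})\le 0$ keeps every coordinate bounded away from $\hat{\delta}/b_k$, since $\frac{c_k}{b_k-\hat{\delta}/f_k(n)}\to+\infty$ as $f_k(n)\downarrow\hat{\delta}/b_k$ and hence $h\to+\infty$ there. Thus the feasible set is compact, and the continuous objective attains a minimizer $\boldsymbol{f}^\ast$.

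Next, suppose for contradiction that $h(\boldsymbol{f}^\ast)<0$, and fix any index $k_0\in\mathcal{K}_C(n)$. I would perturb only the $k_0$-th coordinate downward, setting $\tilde f_{k_0}=f_{k_0}^\ast-\epsilon$ and leaving all other coordinates unchanged, and verify feasibility of the perturbed point $\tilde{\boldsymbol f}$ for all sufficiently small $\epsilon>0$: the upper bound $\tilde f_{k_0}\le f_{k_0}^0$ holds a fortiori; $\tilde f_{k_0}>\hat{\delta}/b_{k_0}$ still holds because $\boldsymbol{f}^\ast$ lies strictly inside the domain of $h$; and $h(\tilde{\boldsymbol f})\le 0$ by continuity of $h$, the strict inequality $h(\boldsymbol{f}^\ast)<0$ being preserved under a small enough perturbation. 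Since $W_{k_0}(n)>0$ and $0<\tilde f_{k_0}<f_{k_0}^\ast$, we get $W_{k_0}(n)\tilde f_{k_0}^2<W_{k_0}(n)(f_{k_0}^\ast)^2$, so the objective value strictly decreases — contradicting optimality of $\boldsymbol{f}^\ast$. Therefore $h(\boldsymbol{f}^\ast)=0$, i.e., constraint \eqref{eq-constraint-func-inequality} is active.

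I do not expect a genuine obstacle: the argument is elementary once the two monotonicity facts are in hand. The only points that need a little care are the existence of a minimizer (handled by the compactness remark above, which itself relies on the blow-up of $h$ at the lower edge of its domain) and checking that the downward perturbation does not leave the domain of $h$, both of which are immediate. One should also note that the case $\mathcal{K}_C(n)=\emptyset$ is vacuous, since then $h\equiv-1$ and there is nothing to allocate; the statement is intended for nonempty $\mathcal{K}_C(n)$.
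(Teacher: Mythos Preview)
Your proof is correct and follows essentially the same route as the paper: a contradiction argument that decreases some coordinate of an optimal $\boldsymbol{f}^\ast$ with $h(\boldsymbol{f}^\ast)<0$, using that the objective is strictly increasing in each $f_k(n)$ while $h$ is decreasing, to produce a feasible point with smaller objective. Your version is more careful about existence of a minimizer and staying in the domain of $h$, but the key idea is identical.
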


\begin{proof}
Assume there is an optimal solution to $\mathbf{P}_2$ which achieves ``$<$'' in \eqref{eq-constraint-func-inequality}.  Then, there must be another feasible solution achieving ``$=$'' in \eqref{eq-constraint-func-inequality} by decreasing the CPU frequency for some CAV pairs, thus further decreasing the objective value in \eqref{eq-obj} without violating constraint \eqref{eq-CPU-upper-bound}, as the objective function is an increasing function of $\boldsymbol{f}_C(n)$ while constraint function $h( \boldsymbol{f} )$ is a decreasing function of $\boldsymbol{f}_C(n)$. 
Therefore, the assumption must be false, and constraint \eqref{eq-constraint-func-inequality} must be active for an optimal solution. Lemma 1 is proved. 
\end{proof}

For problem $\mathbf{P}_2$, the objective function is a second-order function of decision variable $f_k(n)\in\boldsymbol{f}_C(n)$ with coefficient $W_k(n)>0$, which is convex.  
Constraint \eqref{eq-CPU-upper-bound} is linear.  
For constraint \eqref{eq-constraint-func-inequality}, the second-order derivative of constraint function $h( \boldsymbol{f} )$ with respect to $f_k(n)$, is given by
\begin{align} 
	\frac{\partial h^2( \boldsymbol{f} )}{\partial \left(f_k(n)\right)^2} = \frac{2 c_k b_k \hat{\delta}}{\left( b_k f_k(n) - \hat{\delta}\right)^3}.
\end{align}
As $f_k(n) >  \frac{\hat{\delta}}{b_k}$, we have $\partial h^2( \boldsymbol{f} ) / \partial \left(f_k(n)\right)^2 >0$, thus $h( \boldsymbol{f} )$ is convex.
Therefore, $\mathbf{P}_2$ is convex. 

For a convex problem, strong duality holds if Slater's condition (or a weaker form) is satisfied, which requires the nonlinear inequality constraints to be strictly feasible~\cite{boyd2004convex}. 
For problem $\mathbf{P}_2$, there is only one nonlinear inequality constraint in \eqref{eq-constraint-func-inequality}. Suppose the optimal solution is denoted as $\boldsymbol{f}_C^\ast(n)$ if the problem is feasible. 
Based on Lemma 1, we have $h( \boldsymbol{f}^\ast ) = 0$. Then, as long as $h( \boldsymbol{f}^0 ) < 0$, there must exist at least one strictly feasible non-optimal solution $\boldsymbol{f}_C^\diamond(n) \succeq \boldsymbol{f}_C^\ast(n)$ which gives a larger objective value while satisfying $\boldsymbol{f}_C^\diamond(n) \preceq \boldsymbol{f}_C^0(n)$ and $h( \boldsymbol{f}^\diamond ) < 0$.
Therefore, strong duality holds for $\mathbf{P}_2$ if $h( \boldsymbol{f}^0 ) < 0$. Theorem 1 is proved. 
Note that, if $h( \boldsymbol{f}^0 ) = 0$, we can directly obtain the optimal solution as $\boldsymbol{f}_C^\ast(n) = \boldsymbol{f}_C^0(n)$; if $h( \boldsymbol{f}^0 ) > 0$, the problem is infeasible.

\bibliographystyle{IEEEtran}
\bibliography{ref_final}

\end{document}